\definecolor{links}{RGB}{11, 85, 255}
\definecolor{cites}{RGB}{0, 200, 0}
\definecolor{urls}{RGB}{255, 116, 0}
\pgfplotsset{compat=1.14}
\newcommand{\cI}{\mathcal{I}}
\newcommand{\R}{\mathbb{R}}
\newcommand{\insp}{v}
\newcommand{\cyclic}{\text{cyclic}}
\newcommand{\agent}{\text{a}}
\newcommand{\principal}{\text{p}}
\newcommand{\ind}[1]{\mathbbm{1}\left[#1\right]}
\newcommand{\bb}{\textbf{b}}
\newcommand{\inst}{\mathcal{I}}
\newcommand{\eqdef}{\overset{\mathrm{def}}{=\mathrel{\mkern-3mu}=}}
\theoremstyle{theorem}
\newtheorem{theorem}{Theorem}[section]
\newtheorem*{theorem*}{Theorem}
\newtheorem{lemma}[theorem]{Lemma}
\newtheorem{proposition}[theorem]{Proposition}
\newtheorem{remark}[theorem]{Remark}
\newtheorem{observation}[theorem]{Observation}
\newtheorem{definition}[theorem]{Definition}
\newtheorem{example}[theorem]{Example}
\begin{document}

\title{Contract Design Beyond Hidden-Actions}
\date{}

\author{Tomer Ezra\thanks{Harvard University, USA (tomer@cmsa.fas.harvard.edu).} 
\and Stefano Leonardi\thanks{Sapienza University of Rome, Italy (leonardi@diag.uniroma1.it, mrusso@diag.uniroma1.it).}
\and Matteo Russo\footnotemark[2]
}

\maketitle

\thispagestyle{empty}

\begin{abstract}
In the classical principal-agent hidden-action contract model, a principal delegates the execution of a costly task to an agent. In order to complete the task, the agent chooses an action from a set of actions, where each potential action is associated with a cost and a success probability to accomplish the task. To incentivize the agent to exert effort, the principal can commit to a contract, which is the amount of payment based on the task's success but not on the hidden-action chosen by the agent.

In this work, we study the contract design framework under binary outcomes where we relax the hidden-action assumption.
We introduce new models where the principal is allowed to inspect subsets of actions at some cost that depends on the inspected subset. If the principal discovers that the agent did not select the agreed-upon action through the inspection, the principal can withhold payment. 
This relaxation of the model introduces a broader strategy space for the principal, who now faces a tradeoff between positive incentives (increasing payment) and negative incentives (increasing inspection).

We devise algorithms for finding the best deterministic and randomized incentive-compatible inspection schemes for various assumptions on the inspection cost function. In particular, we show the tractability of the case of submodular inspection cost functions.   We complement our results by showing that it is impossible to efficiently find the optimal randomized inspection scheme for the more general case of XOS inspection cost functions, and that there is no PTAS for the case of subadditive inspection cost functions.

\end{abstract}
\clearpage
\pagenumbering{arabic}

\section{Introduction}

Contract theory is the field of economics that studies the interaction between two interested parties, a principal and an agent, and seeks to answer the foundational question: ``How do we incentivize people to work?'' The original model that captures the tension between the two rational individuals  involved is called \emph{hidden-action} principal-agent interaction \cite{HolmstromM87, HolmstromM91, GrossmanH83}. The principal aims to incentivize an agent to take some costly action among a set of $n$ potential actions $A$. Each action $a\in A$ that the agent can take incurs a cost $c(a)$ for the agent and results in a distribution over the set of $m$ potential outcomes $O=\{o_1,\ldots,o_m\}$, where each outcome $o_j$ is associated with a reward for the principal $r_j$.  
In order to incentivize the agent to exert effort, that will more likely lead to a favorable outcome, the principal can post a payment scheme (or a contract) that maps the set of outcomes to payments for the agent. The goal of the principal is to design a contract that incentivizes the agent to exert effort, which will maximize her own expected utility (reward minus payment). The postulates defining contract design are \textit{limited-liability} which states that the payments are only one-way from the principal to the agent, and \textit{hidden-action}, which states that the payments can only depend on the realized outcome and not the action taken by the agent. 
When outcomes are binary ($m=2$), it is without loss of generality to assume that $r_1=0,r_2=1$, i.e., success or failure, and each action $a$ is associated with the probability of success $f(a)$. 
Then, the goal of the principal is to find a single-parameter $\alpha \in [0,1]$ (the payment in case of success), such that the agent is incentivized to take some action $a^*$ (i.e., $a^* \in \arg\max_{a} \alpha f(a)-c(a)$) with maximum utility for the principal (i.e., maximizing $(1-\alpha)f(a^*)$). 

However, in many real-life scenarios,  the principal may be able to examine specific actions performed by the agent. 
Consider for example a bank (the principal) that partners with a loan officer (the agent) to evaluate and approve loan applications. The loan officer can choose between conducting thorough evaluations, which are time-consuming and costly, or performing quick but superficial reviews, increasing the risk of approving high-risk loans. Since the bank cannot directly observe the evaluation process, it designs a contract that combines per-loan commissions with random inspections of approved loans. There are many potential types of inspections the bank can do, and each combination of them is associated with (potentially combinatorial) cost. If audits reveal insufficient evaluations, penalties or reduced payments are imposed, incentivizing the loan officer to align his actions with the bank’s low-risk policies while balancing inspection costs.
Another example is where a store owner can send inspectors to check the performance of his salespeople, and a company can install activity monitoring software on the computers of its employees. Another motivating example inspired by the concurrent work of \citet{FallahJ23}, involves a principal aiming to incentivize a worker to choose an action of the highest safety level among $L$ available levels, with the principal able to verify whether the safety level of the selected action is at least $\ell$. While \citet{FallahJ23} focused on scenarios with only two safety levels (safe and unsafe), our framework addresses more complex cases such as having multiple levels.
In such scenarios, the principal could base the payment not only on the outcome but also on the result of the inspection. 

\paragraph{Contracts with Inspection Schemes.} In this work, we relax the hidden-action postulate by allowing the principal to inspect the actions taken by the agent, incurring an inspection cost. 
We focus on natural contracts, where the principal must commit to a payment and inspection distribution in advance and is only permitted to withhold payment if the agent is caught not performing the desired action.\footnote{If the principal is allowed to condition payment on the event of an inspection, she could easily extract the entire welfare by inspecting the welfare-maximizing action $i^*=\arg\max_{i} f(i)-c(i)$ with a very low probability $\varepsilon$, and paying the agent $\frac{c(i^*)}{\varepsilon}$ only if this action is actually inspected. The agent is incentivized to select action $i^*$ since the utility from it is non-negative, and all other actions lead to receiving a payment of $0$. 
The principal's utility is then $f(i^*)-c(i^*) -\varepsilon \cdot v(i^*)$, where $v(i^*)$ is the cost of inspecting action $i^*$. When $\varepsilon\rightarrow 0$, then this utility converges to the maximum social welfare of the instance. This solution concept would be not only mathematically trivial but also not very useful in practice since it uses that the agent is willing to accept a payment just in an arbitrarily small probability and being paid in cases that are fully controlled by the principal.}

Our setting can be described as follows: We are given a set of actions $A$ the agent can carry out, each associated with a cost $c(a)$, %
a success probability $f(a)$, %
and an inspection cost function $v:2^A \rightarrow \R_{\geq 0}$.
The principal then proposes an inspection scheme described by a suggested action $i\in A$, a payment $\alpha$, and an inspection distribution $p$ over subsets of actions.
If the agent takes the suggested action $i$, then he is paid $\alpha$ in case of success.
If the agent takes some other action $j\neq i$, then, only in the case that neither action $i$ nor action $j$ are inspected, the agent is paid $\alpha$ in case of success. This is what we call \textit{fully-observable} inspection model. We say that an inspection scheme is incentive-compatible (IC) if the suggested action is the best response for the agent.
The principal's goal is to find the IC inspection scheme that maximizes the utility of the principal.

We also extend the inspection model to a setting where the inspection outcome is only \textit{partially-observable}. Namely, when the principal inspects some set $S \subseteq A$, she only receives as feedback a single bit indicating whether the action $j$ taken by the agent was in $S$  or not. This is in contrast to the \textit{fully-observable} model where, by inspecting set $S \subseteq A$, the principal knows \textit{exactly} which action $j \in S$ (if any) the agent has taken.

\medskip

Regarding the example of a manager hiring an inspector to evaluate the salespeople's performance, it may not be feasible or cost-effective to always conduct inspections. Consequently, the principal may choose to inspect the salespeople with a certain probability, thereby reducing inspection costs. This strategic approach could potentially result in increased utility, as illustrated in the following example.
\begin{tcolorbox}
\begin{example}\label{ex:gap}
    Let us consider the following instance with three actions $A = \{\bot, b, g\}$: $c(\bot)=0,~f(\bot)=\frac{1}{10},~ v(\bot)=1,~c(b)=\frac{1}{10},~f(b)=\frac{1}{2}, ~v(b)=1,~c(g)=\frac{7}{20},~f(g)=1,~v(g) = \frac{1}{10}$. We assume $v$ to be additive. 
    \begin{itemize}
        \item Under no inspection, we can observe that $\alpha \in [0, \frac{1}{4})$ incentivizes action $\bot$ and offering $\alpha = 0$ yields utility $\frac{1}{10}$; $\alpha \in [\frac{1}{4}, \frac{1}{2})$ incentivizes action $b$ and  offering $\alpha = \frac{1}{4}$ yields utility $\frac{3}{8} > \frac{1}{10}$; $\alpha \in [\frac{1}{2}, 1]$ incentivizes action $g$ and offering $\alpha = \frac{1}{2}$ yields utility $\frac{1}{2} > \frac{3}{8}$. Thus, the highest utility for the principal is $\frac{1}{2}$  and corresponds to action $g$.
        \item If the principal is only allowed to use a deterministic inspection schemes, i.e., always inspect a specific subset of actions, then inspecting any action different from action $g$ yields strictly negative utility to the principal. The principal can thus ask the agent to do action $g$, inspect it, and offer $\alpha=\frac{7}{20}$, leading to a utility of $\frac{13}{20} - \frac{1}{10} = \frac{11}{20} > \frac{1}{2}$, i.e., it is strictly larger than the highest utility under no inspection.
        \item Under randomized inspection schemes, the principal can ask the agent to take action $g$, inspect it with probability $\frac{3}{7}$ (and all other actions with probability $0$), offer $\alpha=\frac{7}{20}$ (so as to incentivize action $g$ over the other actions), which leads to a utility of $\frac{13}{20} - \frac{3}{70} = \frac{17}{28} > \frac{11}{20}$, i.e., it is larger than in the previous two scenarios.
    \end{itemize}
\end{example}
\end{tcolorbox}

In our study, we make the assumption that the inspection cost of actions falls within the complement-free hierarchy introduced by \citet{LehmannLN06}, and widely used in the context of combinatorial auctions \cite{DobzinskiNS05, AssadiKS21, CorreaC23}, combinatorial and multi-agent contracts \cite{DuttingEFK21, DuttingEFK23}, as well as Prophet Inequalities and pricing \cite{CorreaC23, FeldmanGL15} to mention a few. This hierarchy encompasses various real-world scenarios, as can be seen in the example with the company using a monitoring software, where it might be cheaper to buy a monitoring software that checks multiple measures simultaneously rather than purchasing a monitoring software for each measure separately. 
The inspection cost, in this case, exhibits subadditivity or potentially even submodularity (see formal definition in Section~\ref{sec:model}).

\medskip

Our guiding question is that of investigating the \emph{tractability of the optimal inspection scheme under different classes of cost functions}. We aim to characterize when the best deterministic or randomized inspection scheme can be efficiently found.

\subsection{Our Contribution and Techniques}
We first show how to design a simple algorithm that finds the optimal deterministic IC inspection scheme in time polynomial in the number of actions for \emph{every} monotone combinatorial cost function (Theorem~\ref{thm:det}) in the fully observable inspection model. 
To find the best deterministic inspection scheme, we make the following observations. 
Either it is beneficial to inspect the suggested action $i$ and pay the minimal amount to ensure that the action has non-negative utility $\frac{c(i)}{f(i)}$; otherwise, the payment should be such that the agent's utility from action $i$ is the same as from some other action $j$. This allows us to consider a polynomial number of values of $\alpha$ such that either $\alpha=\frac{c(i)}{f(i)}$ or $\alpha=\frac{c(i)-c(j)}{f(i)-f(j)}$ for some action $j$. For each such $\alpha$, the set of actions that need to be inspected can be calculated easily: They are the actions with strictly larger utility for the agent under contract $\alpha$. 

\medskip

We then shift our focus to randomized inspection schemes. Our first main result proves that for the class of \emph{submodular} inspection cost functions (i.e., decreasing marginals), one can find the optimal randomized inspection scheme  in polynomial time. In particular, we show:

\medskip

\noindent \textbf{Main Result 1 (Theorem~\ref{thm:submod}):} In the \emph{fully-observable} inspection model, one can efficiently find the optimal randomized IC inspection scheme for instances with  \emph{submodular} inspection cost functions, given value query access to the inspection cost function. 
\medskip

The key ingredients for finding the optimal randomized inspection schemes for submodular inspection cost functions are outlined in the following steps:
 We first devise an algorithm that given a sequence of marginal probabilities $p(j)$ for each action $j$,  computes the distribution over subsets of actions that minimizes the expected inspection cost while inspecting each action $j$ with a marginal probability $p(j)$.
 In particular, we show that this distribution is supported by a few sets (at most $n+1$), and all sequences of marginals with the same order can be supported by the same sets. This means that, for two sequences of marginals $\{p_1(j)\}_{j\in A}, \{p_2(j)\}_{j\in A}$, with a bijection $\pi :[n] \rightarrow A $, for which $p_1(\pi(1)) \leq \ldots \leq p_1(\pi(n))$ and $p_2(\pi(1)) \leq \ldots \leq p_2(\pi(n))$, then the two distributions returned by the  aforementioned algorithm on $p_1,p_2$ have the same support.
 Utilizing this tool,
 for each suggested action $i$, we can partition the interval $[0,1]$ of potential values of $\alpha$ (the payment) into at most $n^2$ sub-intervals for which within each sub-interval, the optimal randomized inspection scheme only inspects the same $n+1$ inspected sets.
 For each such interval, we show that finding the best inspection scheme that suggests action $i$ and uses $\alpha $ in the interval, is equivalent to a simple program in a constant number of variables and constraints.
 Consequently, we can reduce the original problem to solving a small (polynomial) number of optimization problems.

\medskip

We complement the former result and show an impossibility result for the class of XOS inspection cost functions (the next class in the complement-free hierarchy \cite{LehmannLN06} see Section~\ref{sec:model} for a formal definition). 

\medskip

\noindent \textbf{Main Result 2 (Theorem~\ref{thm:impossibility}):} In the \emph{fully-observable} inspection model, finding the best randomized IC inspection scheme for instances with \emph{XOS} inspection cost functions, requires an exponential number of value (or demand) queries. 

\medskip

To demonstrate this hardness result, we carefully construct a family of instances with the same values of $A,f,c$, but with different XOS inspection cost functions parameterized by a set $T$. 
For each inspection cost function in the family, we show that the optimal randomized inspection scheme is a solution to a specific form of optimization problem that always admits a unique optimal solution.
Distinguishing from the inspection cost function what $T$ is, requires an exponential number of value and demand queries. Given the knowledge of $T$, one can easily find the unique optimal inspection scheme, and from the unique optimal inspection scheme, one can easily derive the parameter $T$ of the instance.  We, furthermore, show an stronger inapproximability result for the broader class of subadditive inspection costs:

\medskip

\noindent \textbf{Main Result 3 (Theorem~\ref{thm:inapproximability}):} In the \emph{fully-observable} inspection model, finding a $\nicefrac{22}{21}$-approximation to the best randomized IC inspection scheme for instances with \emph{subadditive} inspection cost functions, requires an exponential number of value (or demand) queries.

\medskip

In Section~\ref{sec:partial-model}, we extend our results in the \textit{partially-observable} inspection model. We first show that this model generalizes the (single-agent) model that was considered concurrently and independently in the work\footnote{We note that \citet{FallahJ23} also considered settings with multiple agents which is not captured by our model.} of \citet{FallahJ23}. We also show that, in contrast to the \textit{fully-observable} inspection model, deterministic inspection schemes can only be found efficiently for submodular functions (see Theorem~\ref{thm:det-partial-positive}) and not all set functions (see Proposition~\ref{prop:det-partial-negative}). In addition, we show that the problem of finding a randomized IC inspection schemes in the \textit{partially-observable} inspection model admits a convexity property, which allows approximating it:

\medskip

\noindent \textbf{Main Result 4 (Theorem~\ref{thm:insp-partial}):} In the \emph{partially-observable} inspection model,
one can efficiently find an additive $\varepsilon$-approximation to the optimal randomized IC inspection scheme for instances with \emph{submodular} inspection cost functions, given value query access to the inspection cost function. 

\medskip

Moreover, in Appendix~\ref{app:gap}, we extend Example~\ref{ex:gap} and show that the gap between 
the optimal deterministic inspection scheme and optimal randomized inspection scheme, can be as large as $\Omega(n)$. 
Since the utility of the principal is bounded by the welfare achievable in the non-strategic version of the problem, the gap is known to also be at most $O(n)$. Lastly, in Appendix~\ref{app:non-ic} we discuss the gaps in the principal's utility between IC inspection schemes, and non-IC inspection schemes for both the deterministic and randomized cases. In particular, we show that there is no gap in the case of deterministic inspection schemes, but there are cases that admit a gap for randomized inspection schemes.

\subsection{Related Work}

\paragraph{(Algorithmic) Contract Theory.} After the seminal work of \citet{GrossmanH83}, who initiated the field of ``Contract Theory'', and that of \citet{HolmstromM87, HolmstromM91}, recent years have witnessed a surge of interest in algorithmic questions related to this model \citep{DuttingRT19,DuttingEFK21,DuttingRT21,DuttingEFK23}. \citet{DuttingRT19, DuttingRT21}  are among the first to model the problem from an algorithmic viewpoint. Their results prove that it is possible to achieve optimality in the worst case through linear contracts, contracts that pay the agent proportionally to the reward associated with an outcome. Numerous other studies have indicated robust optimality of linear contracts in non-Bayesian settings, particularly in the context of max-min criteria \cite{Carrol15, YuK20, WaltonC22, DaiT22}. Recently, \citet{Kambhampati23} has shown how, in certain scenarios, randomized contracts can enhance principal's reward under max-min criteria. For more recent developments on algorithmic contract design, see \cite{DBLP:journals/fttcs/DuttingFT24}.

\paragraph{Combinatorial Models of Contract Theory.} Mostly related to our setting are works that study algorithmic contract design in various combinatorial settings. \citet{DuttingEFK21} consider agents who can select a subset of actions instead of being limited to a single action, and the probability of success is a function of the set of taken actions. They show that if the success probability function is Gross-Substitutes (a set function class defined in \cite{KelsoC82}), then the optimal contract can be found in polynomial time. This model has been adapted by \cite{Dutting24, Deocampo24,ezra_et_al:LIPIcs.ITCS.2024.44}, for different classes of combinatorial set functions including submodular, XOS, and supermodular.  

Another line of research extended the principal-agent model to accommodate settings with multiple agents \cite{BabaioffFNW12,EmekF12,DuttingEFK23,CastiglioniM023, Deocampo24,ezra_et_al:LIPIcs.ITCS.2024.44}. Pioneering work on combinatorial agency by \citet{BabaioffFNW12} and \citet{EmekF12} considers a model where $n$ agents exert effort (or not) and succeed in their assigned tasks with some probability. The success or failure of the entire project is then assumed to be a Boolean function of individual successes and failures. 
\citet{DuttingEFK23} draw the connection between multi-agent contract design and the complement-free hierarchy \cite{LehmannLN06}: Here, each agent possesses a binary action space, and the success probability is a function of the set of agents exerting effort. They show that
the frontier of tractability is the XOS function class, as the function becomes subadditive, the problem is hard to approximate.

Conversely, \citet{AlonLST21} investigate the ``common agency'' problem, which revolves around coordinating several principals to propose a contract to an agent. They expand upon VCG contracts, initially introduced by \citet{LaviS22} for situations with complete information, to accommodate scenarios where principals lack complete information about the agent's attributes.

Finally,  \citet{CastiglioniM023} considers scenarios where the principal observes not just the final outcome, but also the individual (stochastic) outcomes of the agents. Then, the goal of the principal is to design contracts that depend on the individual outcomes. Their focus is on principal's utility functions with the properties of IR-supermodularity (increasing returns), i.e., her utility grows faster as agents put more effort, or DR-submodularity (diminishing returns), i.e., her utility grows slower.

\paragraph{Contracts in Bayesian and Learning settings.} \citet{AlonDT21} explore the contract design problem under private cost per unit-of-effort model, and \citet{AlonDLT23, CastiglioniMG22, GottliebM22} in Bayesian settings (with private types). Moreover, in the Bayesian setting, \citet{GuruganeshSW21} and \citet{CastiglioniMG22-2} demonstrate that the principal can enhance her utility by offering a randomized menu of contracts rather than a single fixed contract. 

In the context of online learning, \citet{HoSV16} examine the learning of monotone contracts, presenting a zooming algorithm that achieves sublinear regret. Relaxing the monotonicity assumption, \citet{ZhuBYWJJ23} consider the intrinsic dimension as a complexity measure based on the covering number of a spherical code. They derive a regret bound that exponentially depends on the intrinsic dimension.

\paragraph{Inspection Strategies.} In the context of inspections, the recent independent and concurrent work of \citet{FallahJ23} study the problem of a single principal interacting with multiple agents: In particular, the principal aims to encourage safety-compliant actions to maximize utility by utilizing payments and a constrained budget for random inspections.  
Finally, \citet{BallK23} have investigated the problem of determining the optimal timing for inspections in a time-dependent setting. Focusing on scenarios where agents exhibit impatience over time, the study explores the ideal moments for conducting inspections.

\section{Fully-Observable Inspection Model}\label{sec:model}
We consider a setting of a single principal and a single agent with binary outcomes. The agent can select an action $a \in A$, each is associated with a cost $c(a)\in \R_{\geq 0}$ and a success probability $f(a) \in [0,1]$. If the agent selects action $i$, the task is successful with probability $f(a)$, which gives the principal a reward (which we normalize without loss of generality to $1$). 
We assume that the agent has a null action $\bot \in A$ for which  $c(\bot)=0$ which enforces the game to be individually rational for the agent, and that there are overall $n$ actions (unless stated otherwise). The null action can be viewed as allowing the agent to opt out of the suggested contract.
The principal can inspect whether some actions were taken by the agent and pay for the inspection cost. If the principal inspects action set $S \subseteq A$, then she gets to observe whether \textit{any} action $j \in S$ has been taken by the agent. In Section~\ref{sec:partial-model}, we analyze different inspection model where the principal only observes whether \textit{some} action $j \in S$ has been taken by the agent. The inspection costs the principal $\insp(S)$, where $\insp:2^A \rightarrow \R_{\geq 0}$ is a combinatorial function which is normalized (i.e., $\insp(\emptyset) =0$) and monotone (i.e., for every $S\subseteq T\subseteq  A$ it holds that $\insp(S)\leq \insp(T)$). An instance of our setting is then defined by a quadruplet $(A,c,f,v)$.
The principal's task is to design an inspection scheme consisting of three components: (1) a suggested action $a\in A$, (2) a contract $\alpha \in [0,1]$, and (3) an inspection distribution $p$ which is a distribution over sets that the principal commits to inspect (i.e., $\sum_{S \subseteq A} p(S) =1$, and for all $S\subseteq A$, $p(S)\in [0,1]$). 
An inspection scheme $(a,\alpha,p)$ is deterministic if there exists $S\subseteq A$, such that $p(S) =1$ and for every $T\neq S$, $p(T)=0$. For an action $a\in A$ and an inspection distribution $p$, we denote by $p(a)\eqdef \sum_{S: a\in S} p(S)$. 

\paragraph{Agent's and Principal's Utility.} In order to introduce the agent's expected utility under inspection scheme $(i,\alpha,p)$, we observe that if the agent takes the incentivized action (action $i$), then the agent's utility is $\alpha f(i) - c(i)$ in expectation. If, instead, the agent decides to take action $j \neq i$, as long as the principal does not inspect a set containing either $i$ or $j$, then we say that the agent is not caught and is paid the promised $\alpha$-fraction of the reward (in case of success). Otherwise (the principal inspects some set containing $i$ or $j$), the agent is caught and thus paid $0$. In expectation, we have that, for $j \neq i$, the utility is $\alpha f(j) \cdot (1 - \sum_{S: \{i, j\} \cap S \neq  \emptyset} p(S))- c(j)$.\footnote{Notice that $1 - \sum_{S: \{i, j\} \cap S \neq  \emptyset} p(S)$ is precisely the probability the agent does not get caught by the principal.} To summarize, given an inspection scheme $(i, \alpha, p)$ the principal proposes, the agent's expected utility from selecting action $j$ is:
\begin{align*}
    u_{\agent}(i, \alpha, p,j) = \begin{cases}
       \alpha f(j) - c(j) &\text{ if } j = i\\
       \alpha f(j) \cdot (1 - \sum_{S: \{i, j\} \cap S \neq \emptyset} p(S))- c(j) &\text{ if } j \neq i
    \end{cases}.
\end{align*}
On the other hand, the principal's expected utility from proposing inspection scheme $(i, \alpha, p)$ and the agent selecting action $j$ is
\begin{align*}
    u_{\principal}(i, \alpha, p,j) = \begin{cases}
        \left(1 - \alpha \right) \cdot f(j) - \sum_{S \subseteq A} p(S) \cdot \insp(S) &\text{ if } j = i\\
        \left(1 - \alpha \cdot \left(1 - \sum_{S: \{i, j\} \cap S \neq \emptyset} p(S)\right)\right) \cdot f(j) - \sum_{S \subseteq A} p(S) \cdot \insp(S) &\text{ if } j \neq i
    \end{cases}.
\end{align*}
The goal of the principal is to find an \textit{incentive-compatible} (IC) inspection scheme $(i, \alpha, p)$, where, by incentive-compatible, we mean that 
$i \in \arg\max_{\ell \in A} u_{\agent}(i, \alpha, p,\ell)$, i.e., the agent is incentivized to take the suggested action $i$ (we assume that in case of a tie, the agent breaks the tie in favor of the suggested action). 
Among the incentive-compatible inspection schemes, the principal wants to find the one with the highest utility,
i.e., $u_{\principal}(i, \alpha, p,i)$ is maximized, and $i \in \arg\max_{\ell \in A} u_{\agent}(i, \alpha, p,\ell)$. 

We first observe that it is always better for the principal to inspect the suggested action $i$ by itself or inspect some other set of actions not including $i$.
\begin{observation}\label{obs:prime}
    Given an inspection scheme $(i, \alpha, p)$, define inspection distribution  $p^\prime$ such that $p^\prime(\{i\}) \eqdef p(i)$, $p^\prime(S) \eqdef 0$ for $S$ such that $i \in S, S \neq \{i\}$, and $p^\prime(S) \eqdef p(S)$ otherwise. It holds that, for all $j \in A$,
    \begin{itemize}
        \item $j \in \arg\max_{\ell \in A} u_{\agent}( i, \alpha, p,\ell) \Longleftrightarrow j \in \arg\max_{\ell \in A} u_{\agent}( i, \alpha, p^\prime,\ell)$;
        \item $u_{\principal}(i, \alpha, p^\prime,j) \geq u_{\principal}(i, \alpha, p,j)$.
    \end{itemize}
\end{observation}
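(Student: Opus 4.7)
The plan is to reduce both bullet points to a single identity: for every $j \neq i$, the probability that the agent is ``caught'' when taking action $j$ is the same under $p$ and $p'$, i.e.\ $\sum_{S:\, \{i,j\} \cap S \neq \emptyset} p(S) = \sum_{S:\, \{i,j\} \cap S \neq \emptyset} p'(S)$. Granted this identity, the agent's utility from every action coincides under the two inspection distributions (the case $j=i$ being trivial since the inspection plays no role in $u_{\agent}(i,\alpha,\cdot,i)$), so the argmax sets match and the first bullet follows immediately. The principal's expected reward term at every action also coincides, so only the inspection cost $\sum_S p(S)\,\insp(S)$ can differ, and I will argue it weakly decreases.

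To establish the identity, I would split the sum over $S$ according to whether $i \in S$. The ``$i \notin S$'' contribution is unchanged, because $p$ and $p'$ agree on such sets by construction. For the ``$i \in S$'' contribution, under $p$ it is $\sum_{S:\, i \in S} p(S) = p(i)$ regardless of $j$ (the set automatically meets $\{i,j\}$), while under $p'$ all of that mass has been concentrated on $\{i\}$, and $\{i\}\cap\{i,j\}\neq\emptyset$, so this part contributes $p'(\{i\}) = p(i)$ as well. Hence the two totals agree.

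For the cost comparison, I would compute
\[
\sum_S p(S)\,\insp(S) - \sum_S p'(S)\,\insp(S) = \sum_{S:\, i \in S} p(S)\,\insp(S) - p(i)\,\insp(\{i\}) = \sum_{S:\, i \in S} p(S)\bigl(\insp(S) - \insp(\{i\})\bigr) \geq 0,
\]
using $\sum_{S:\, i \in S} p(S) = p(i)$ in the middle equality and the monotonicity of $\insp$ in the final inequality (since $\{i\} \subseteq S$ whenever $i \in S$). Combined with the fact that the reward term in $u_{\principal}$ at every $j$ is preserved (by the identity above for $j\neq i$, and trivially for $j=i$), this yields $u_{\principal}(i,\alpha,p',j) \geq u_{\principal}(i,\alpha,p,j)$ for every $j$, proving the second bullet. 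There is no real obstacle here; the only bookkeeping subtlety is keeping the decomposition of the ``caught'' event into its two disjoint pieces straight, after which both claims fall out of monotonicity of $\insp$ and the definition of $p'$.
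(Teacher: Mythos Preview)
Your proposal is correct and follows essentially the same route as the paper: both establish the identity $\sum_{S:\{i,j\}\cap S\neq\emptyset}p(S)=\sum_{S:\{i,j\}\cap S\neq\emptyset}p'(S)$ by splitting on whether $i\in S$, and both derive the cost inequality from monotonicity of $\insp$ via $\insp(S)\ge\insp(\{i\})$ for $S\ni i$. The only cosmetic difference is that you phrase the cost comparison as a single difference $\sum_{S:i\in S}p(S)\bigl(\insp(S)-\insp(\{i\})\bigr)\ge0$, whereas the paper writes out the same chain as separate inequalities.
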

\begin{proof}
    The first part of the observation follows since for each $j\in A \setminus \{i\}$, \begin{equation}\label{eq:pp}
        \sum_{S:\{i,j\} \cap S\neq \emptyset} p(S) = \sum_{S:\{i,j\} \cap S\neq \emptyset} p^\prime(S),
    \end{equation} therefore, the functions $u_{\agent}(i,\alpha,p,j) =u_{\agent}(i,\alpha,p^\prime,j)$ (and it also holds that $u_{\agent}(i,\alpha,p,i) = u_{\agent}(i,\alpha,p^\prime,i)$).

    The second part of the claim follows by Equation~\eqref{eq:pp} and by the  monotonicity of $v$, i.e., 
    \begin{eqnarray*}
    \sum_{S \subseteq A} p(S) \cdot v(S) & = & \sum_{S \subseteq A: i \in S} p(S) \cdot v(S)  +\sum_{S \subseteq A: i \notin S} p(S) \cdot v(S) \\ &  \geq & \sum_{S \subseteq A: i \in S}p(S) \cdot v(\{i\}) +\sum_{S \subseteq A: i \notin S} p(S) \cdot v(S) \\ &  = & p(i) \cdot v(\{i\}) +\sum_{S \subseteq A: i \notin S} p(S) \cdot v(S) = \sum_{S \subseteq A} p^\prime(S) \cdot v(S),\end{eqnarray*} 
    which concludes the proof.
\end{proof}

Note that the modification to the inspection scheme of Observation~\ref{obs:prime} above preserves determinism, i.e., if the initial inspection scheme $(i,\alpha,p)$ is deterministic, then also $(i,\alpha,p^\prime)$ created by Observation~\ref{obs:prime} is.

In Appendix~\ref{app:non-ic}, we discuss non-incentive-compatible inspection schemes, where the principal might suggest some action $j$ to induce the agent to best respond with action $i$ in order to gain higher utility. We show that, for deterministic inspection schemes, incentive-compatibility is without loss of generality, and the principal has no gain in suggesting an action that is not the best response of the agent. However, for randomized inspection schemes, non-incentive-compatible inspection schemes can yield higher principal's utility than the best incentive-compatible one.

\paragraph{Combinatorial Inspection Cost Functions.}

We focus on inspection cost functions 
$\insp: 2^A \rightarrow \R_{\geq 0}$
that belong to one of the following classes of set functions \citep{LehmannLN06}:

\begin{itemize}
    \item Set function $\insp$ is \emph{additive} if for every $S\subseteq A$ it holds that $\insp(S)=\sum_{i\in S} \insp(\{i\})$. 
    \item Set function $\insp$ is \emph{submodular} if for any two sets $S, S' \subseteq A$ with $S \subseteq S'$ and any $i \in A$ it holds that $\insp(i \mid S) \geq \insp(i \mid S')$ where $\insp(i \mid S)  = \insp(\{i\} \cup S) - \insp(S)$ is called marginal.
    \item Set function $\insp$ is \emph{XOS} if there exists a collection of additive functions $\{\gamma_\ell: 2^A \rightarrow \R_{\geq 0}\}_{\ell \in [k]}$ such that for each set $S \subseteq A$ it holds that $\insp(S) = \max_{\ell \in [k]} \gamma_\ell(S) $.  
    \item Set function $\insp$ is \emph{subadditive} if for any two sets $S, S' \subseteq A$ it holds that $\insp(S) + \insp(S') \geq \insp(S \cup S')$.
\end{itemize}

It is well known that $\text{additive} \subset \text{submodular} \subset \text{XOS} \subset \text{subadditive}$ and all containment relations are strict \citep{LehmannLN06}.

\paragraph{Oracles for accessing $v$.}
As is common in the combinatorial optimization literature involving set functions, we assume two primitives for accessing $v$:
\begin{itemize}
    \item A \emph{value oracle} for $v$ is given $S \in 2^A$ and returns $v(S)$.
    \item A \emph{demand oracle} for $v$ is given a vector of prices $q = (q_1, \ldots, q_n) \in \R^n_{\geq 0}$ and returns a set $S \in 2^A$ that maximizes $v(S) - \sum_{j \in S} q_j$. We denote this set by $D(v, q)$.
\end{itemize}

Both value and demand oracles are considered standard in combinatorial optimization problems over set functions. 
In markets for goods (e.g., combinatorial auctions), a demand query corresponds to the best bundle to purchase given item prices. 
Demand oracles have proven useful in previous studies on combinatorial contracts \citep{DuttingEFK21} and multi-agent contracts \citep{DuttingEFK23}. 
We assume that all values have a description of size $O(1)$, as our goal is to understand the (computational and information) complexity of the problem as the number of actions grows large.

\section{Deterministic Inspection Schemes}\label{sec:deterministic}

In this section, we present a polynomial time algorithm that finds the best contract with a deterministic inspection scheme. Formally, we prove the following theorem (see proof in Appendix~\ref{app:omitted}):
\begin{restatable}{theorem}{thmdet}\label{thm:det}
There exists an algorithm that for every instance $(A,c,f,v)$ returns the optimal \textit{deterministic} IC inspection-scheme that runs in time polynomial in $n=|A|$, and uses at most $ n^2$ value queries to $v$.
\end{restatable}

In order to prove the above theorem, we first note that a deterministic inspection scheme can be described by a triplet $(i, \alpha, S)$ composed of the suggested action $i$, the contract $\alpha$, and the inspected set $S$. We next present the algorithm for finding the best deterministic inspection scheme. 

\paragraph{Algorithm.}
We first go over all actions with $c(i)=0$, and consider the one with the maximal value of $f(i)$, i.e., let $i^* = \arg\max_i \{f(i) \mid i\in A \wedge c(i)=0\}$.
This action can be incentivized by using the inspection scheme $(i^*,0,\emptyset)$, which gives the principal an expected utility of $f(i^*)$.

We then go over all actions $i\in A$ with $f(i)>c(i)>0$, and for each such $i$ we define\footnote{Observe that $A_i$ is well defined since $f(i)>f(j)$ and $f(i)>0$.} $$A_i = \left\{ j \mid f(j) < f(i) \wedge \frac{c(i)-c(j)}{f(i)-f(j)}  > \frac{c(i)}{f(i)} \right\},$$ 
in particular, for every $j \in A_i$ it must be that $\frac{c(i)-c(j)}{f(i)-f(j)} > \frac{c(i)}{f(i)} >0 $. The interpretation of $A_i$ is the set of actions $j$ with lower value of $f$ than $i$, such that if the principal uses $\alpha = \frac{c(i)}{f(i) }$ then the agent strictly favors them over $i$. 
Then, we define $$S_{i}\eqdef A_i \cup \left\{ j^\prime \neq i \mid f(j^\prime) \geq f(i) \wedge f(j^\prime) \cdot \frac{c(i)}{f(i)} >c(j^\prime) \right\} $$ and for each $j\in A_i$, we define 
\begin{align*}
	S_{i,j} \eqdef & \left\{ j^\prime\in A_i \mid \frac{c(i)-c(j^\prime)}{f(i)-f(j^\prime)} > \frac{c(i)-c(j)}{f(i)-f(j)} \right\} \\ &\cup \left\{ j^\prime  \mid f(j^\prime) \geq f(i) \wedge \frac{c(i)-c(j)}{f(i)-f(j)} \cdot (f(j^\prime)-f(i))  > c(j^\prime)-c(i) \right\}.
	\end{align*}
The interpretation of $S_i$ (respectively, $S_{i,j}$) is the set of actions that needs to be inspected if $i$ is not inspected and $\alpha= \frac{c(i)}{f(i)}$ (respectively, when $\alpha =\frac{c(i)-c(j)}{f(i)-f(j)} $), as if any of these actions is not inspected, the agent strictly favor them over action $i$.

We then consider  for each $j\in A_i $ the inspection scheme $(i,\frac{c(i)-c(j)}{f(i)-f(j)}, S_{i,j})$.
We also consider the inspection schemes $(i,\frac{c(i)}{f(i)}, S_i) $, and $(i,\frac{c(i)}{f(i)}, \{i\}) $.
We return the inspection scheme with the maximum utility for the principal among all considered inspection schemes.

\section{Randomized Inspection Schemes: Submodular Inspection Cost} \label{sec:submod}

In this section, we shift our attention to randomized inspection schemes, recalling that the gap between deterministic and randomized inspection schemes in terms of the principal's expected utility can be as large as $\Omega(n)$ (see Appendix~\ref{app:gap} for further details). 
Our main result is a polynomial time algorithm that finds the optimal randomized inspection scheme for submodular inspection cost functions. Formally, we show  the following:

\begin{theorem}\label{thm:submod}
    For every instance $(A,c,f,v)$ with submodular inspection cost function $v$, there exists an optimal \textit{randomized} incentive-compatible inspection scheme $(i,\alpha,p)$ where $p$ is supported by at most $n+1$ sets. Moreover, such an optimal inspection scheme can be found in time polynomial in $n=|A|$ using value query access to $v$.
\end{theorem}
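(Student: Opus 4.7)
By \Cref{obs:prime}, I may assume that the inspection distribution $p$ places mass only on $\{i\}$ and on subsets of $A \setminus \{i\}$. Set $p_i \eqdef p(\{i\})$ and, for each $j \in A \setminus \{i\}$, $x_j \eqdef \sum_{S \subseteq A \setminus \{i\},\, j \in S} p(S)$ (the marginal probability of inspecting $j$). The probability that $i$ or $j$ is inspected is then exactly $p_i + x_j$, so the IC constraint for each $j \neq i$ reads $p_i + x_j \geq L_{ij}(\alpha)$, where $L_{ij}(\alpha) \eqdef \max\bigl\{0,\, 1 - f(i)/f(j) + (c(i) - c(j))/(\alpha f(j))\bigr\}$; note that $L_{ij}$ is affine in $\beta \eqdef 1/\alpha$. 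The principal's expected utility becomes $(1-\alpha) f(i) - p_i\, v(\{i\}) - \mathbb{E}[v(S)]$, with $S$ drawn from the restriction of $p$ to subsets of $A \setminus \{i\}$ (total mass $1 - p_i$, marginals $\{x_j\}$).

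\textbf{Chain distributions via the Lov\'asz extension.} For submodular $v$, a classical fact is that the minimum of $\mathbb{E}[v(S)]$ over distributions on $2^{A \setminus \{i\}}$ with prescribed marginals $x$ equals the Lov\'asz extension $\hat v(x)$, and is attained by the chain distribution: let $\sigma$ be a permutation of $A \setminus \{i\}$ with $x_{\sigma(1)} \geq \ldots \geq x_{\sigma(n-1)}$, put mass $x_{\sigma(k)} - x_{\sigma(k+1)}$ on $S_k \eqdef \{\sigma(1), \ldots, \sigma(k)\}$ (with $x_{\sigma(n)} \eqdef 0$), and mass $1 - p_i - x_{\sigma(1)}$ on $\emptyset$. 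Together with the atom $\{i\}$ of mass $p_i$, the overall inspection distribution is supported on at most $n+1$ sets, which establishes the structural half of the theorem. The crucial observation is that the support chain depends only on the ordering $\sigma$ of the marginals, not on their values.

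\textbf{Partition and efficient search.} For each suggested action $i$, I partition $[0,1]$ into $O(n^2)$ sub-intervals on which every $L_{ij}(\alpha)$ has constant sign and the positive $L_{ij}(\alpha)$ values are in a fixed relative order; the break-points are the $O(n^2)$ values of $\alpha$ at which $L_{ij}(\alpha) = 0$ or $L_{ij}(\alpha) = L_{ij'}(\alpha)$. By monotonicity of $v$, the optimum sets $x_j = \max\{0, L_{ij}(\alpha) - p_i\}$, so the ordering of the $x_j$'s inherits that of the $L_{ij}(\alpha)$'s and the support chain is pinned down within each sub-interval. Switching to the variables $(\beta, p_i, x)$, the IC constraints become linear and $\hat v(x) = \sum_k (x_{\sigma(k)} - x_{\sigma(k+1)}) v(S_k)$ is linear in $x$ with coefficients obtained by $O(n)$ value queries to $v$; the objective $(1 - 1/\beta) f(i) - p_i v(\{i\}) - \hat v(x)$ is then a linear-fractional function of $O(n)$ variables. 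The per-sub-interval optimum can be located in polynomial time (e.g., by the Charnes--Cooper transformation, or by enumerating the linear pieces of the parametric LP value $C_i(\beta)$ and solving the explicit one-variable stationary-point equation on each). Iterating over the $n$ choices of $i$ and the $O(n^2)$ sub-intervals gives $O(n^3)$ such programs and returns the best.

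\textbf{Main obstacle.} The delicate step is verifying that a single chain-indexed formulation captures the true optimum throughout an entire sub-interval, given that as $(\alpha, p_i)$ varies some $x_j$'s may drop to $0$ or coalesce with neighbors---which a priori could change the relevant chain. I plan to handle this by keeping the formal chain indexed by the fixed $L_{ij}(\alpha)$ ordering and letting weight-zero prefixes absorb the degeneracies; the identity $\hat v(x) = \sum_k (x_{\sigma(k)} - x_{\sigma(k+1)}) v(S_k)$ and the linearity of the program are thereby preserved. A final verification that the distribution reconstructed from the optimal $(\beta, p_i, x)$ is feasible, incentive-compatible, and supported on at most $n+1$ sets closes the argument and yields both conclusions of the theorem.
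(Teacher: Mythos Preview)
Your proposal is correct and follows essentially the same route as the paper: the ``classical fact'' you invoke about chain distributions realizing the Lov\'asz extension is exactly the content of the paper's \Cref{lem:submod}, the $O(n^2)$ partition of $[0,1]$ by the crossings of the $L_{ij}$'s is the paper's Step~2, and the $n{+}1$ support bound follows identically. The one genuine difference is in the final optimization: the paper additionally enumerates an index $k$ (the number of marginals that are forced to $0$), which collapses each cell to a \emph{two}-variable rational program (Program~\eqref{eq:insp-simple}); you instead keep the $O(n)$ marginals as free variables and appeal to fractional/parametric LP. Your Charnes--Cooper remark is loose---in the variables $(\beta,p_i,x)$ the objective is linear-plus-$f(i)/\beta$, not a ratio of two affine forms, so the standard transformation does not apply directly---but the parametric-LP alternative you also sketch does work and is effectively equivalent to the paper's enumeration over $k$, since the breakpoints of $C_i(\beta)$ within a cell are precisely where some $x_j$ hits~$0$.
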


Before describing the algorithm and showing why Theorem~\ref{thm:submod} holds, we present several definitions and properties of submodular functions.

\begin{definition}
    Given a function $v:2^{A} \rightarrow \R$, the Lov\'asz extension of $v$ is  $v^L:[0,1]^A \rightarrow\R$ where $$ v^L(x) = \mathbb{E}_{\lambda \sim U[0,1]}[v(\{i\in A \mid x_i \geq \lambda \})].$$
\end{definition}
It is known that the Lov\'asz extension of $v$ is the convex closure of $v$ which implies the following lemma (for which we add a formal proof in Appendix~\ref{app:omitted} for completeness).
The next lemma states that, given the marginals of the inspection distribution $p(j)$, if the inspection cost function is submodular, then one can find the distribution with the minimum expected inspection cost consistent with those marginals.

\begin{restatable}{lemma}{lemsubmod}\label{lem:submod}
    For a set $A$ with cardinality $n$, let $p:2^{A} \rightarrow \R_{\geq 0}$ be an arbitrary function, and let $v:2^{A}\rightarrow \R_{\geq 0}$ be a submodular function. Let $p(j) = \sum_{S \subseteq A:j \in S} p(S)$. Let   $\pi:[n] \rightarrow A$ be an arbitrary bijection such that   $p(\pi(1)) \leq \ldots \leq p(\pi(n))$ (such a bijection always exists). For $p^\prime:2^{A} \rightarrow \R_{\geq 0}$ such that for every $t \in [n]$, $p^\prime(\{\pi(t),...,\pi(n)\}) = p(\pi(t))-p(\pi(t-1))$ where $p(\pi(0))=0$, otherwise $p^\prime(S) =0$ for $S\neq \emptyset$, and $p^\prime(\emptyset) = \sum_{S\subseteq A} p(S)-p(\pi(n))$  it holds that:
    \begin{enumerate}
    \item $\sum_{S \subseteq A}  p^\prime(S) = \sum_{S \subseteq A} p(S)$;
        \item For all $j \in A$,  $p^\prime(j) =\sum_{S \subseteq A:j \in S} p^\prime(S)= p(j)$;
        \item $\sum_{S \subseteq A} p^\prime(S) \cdot v(S) \leq \sum_{S \subseteq A} p(S) \cdot v(S)$.
    \end{enumerate}
\end{restatable}

\subsection{Algorithm}\label{sec:alg}

Our algorithm reformulates the problem of finding the best inspection scheme as solving a polynomial number of optimization problems, each with a constant number of variables and constraints. We first write it as a program over an exponential number of variables (the probabilities to inspect each set $p(S)$, and the payment $\alpha$). We then partition the interval $[0,1]$ of options for $\alpha$ to sub-intervals such that within each sub-interval the marginal probabilities that the actions are needed to be inspected have the same relative order.
This allows us to solve the program in each sub-interval separately, and use Lemma~\ref{lem:submod} to decrease the number of variables from the objective function from an exponential number to a constant number. We then solve each optimization (non-linear) program, and return the best one. We now explain our algorithm formally:

\paragraph{Step 1: Going over the actions.} We fix action $i \in A$. If $c(i) = 0$, then the inspection scheme $(i, 0, p(\emptyset))$, where $p_\emptyset$ is the deterministic inspection distribution that inspects $\emptyset$ with probability $1$, is the optimal IC inspection scheme that incentivizes action $i$ (since the principal pays $0$). 

If instead $f(i) \leq c(i)$, then incentivizing action $i$ cannot guarantee a positive utility to the principal, and therefore, the inspection scheme $(\bot,0,p_\emptyset)$ guarantees at least the same utility. 

Otherwise, we have that $f(i) > c(i) > 0$, then by using Observation~\ref{obs:prime} we can formulate the problem of finding the optimal (randomized) IC inspection scheme that incentivizes action $i$ as the following quadratic program:
\begin{align}
    \min_{\alpha, p} \ &\alpha f(i) + \sum_{S \subseteq A} p(S) \cdot \insp(S) \nonumber\\
    \text{s.t. } &\alpha f(i) - c(i) \geq \alpha f(j) \cdot \left(1 - p(i) - p(j)\right) - c(j), \forall j \neq i \label{eq:insp}\\
    \qquad &\alpha \in [0,1], p \in \Delta(2^A), p(i)=p(\{i\}) \nonumber.
\end{align}
We note that the constraints written above are the incentive-compatibility constraints, written in terms of marginal probabilities of inspection $p(j) = \sum_{S: j\in S} p(S)$, that we need to satisfy for the agent to take action $i$. 
Since $\bot \in A$, we can add a constraint to Program~\eqref{eq:insp} of the form $\alpha \geq \frac{c(i)}{f(i)}$ without changing feasibility. Next,  for each $j \in A$, if  $f(j) = 0$, the incentive-compatibility constraint is less restrictive than $\alpha \geq \frac{c(i)}{f(i)}$ and can thus be discarded. Else ($f(j)>0$),  the incentive-compatibility constraint is equivalent to
\[
    p(j) \geq 1 - p(i) - \frac{\alpha f(i)- c(i) + c(j)}{\alpha f(j)}.
\]
For convenience, we name the right-hand side above as $\eta_j(\alpha, p(i))$, for each $j \neq i$ with $f(j)>0$.

\paragraph{Step 2: Partitioning the interval $[0,1]$   to sub-intervals.} Let us now define the set of $\alpha$'s for a specific action $i$ and each of the $n-1 \choose 2$ pairs $j, j^\prime \neq i$ as the $\alpha$'s respectively satisfying $\eta_j(\alpha, p(i)) = \eta_{j^\prime}(\alpha, p(i))$. Formally, let $C_i$ be the set of all $\alpha_i(j,j^\prime)$'s such that $f(j) \neq f(j^\prime)$ and that satisfy
\[
    \alpha_i(j,j^\prime):~ \eta_j(\alpha, p(i)) = \eta_{j^\prime}(\alpha, p(i)) \Longleftrightarrow \alpha_i(j,j^\prime) \eqdef \frac{(c(i) - c(j)) \cdot f(j^\prime) - (c(i) - c(j^\prime)) \cdot f(j)}{(f(j^\prime) - f(j)) \cdot f(i)}.
\]
Let us observe that $\alpha$'s so defined do not depend on $p(i)$. Hence, without loss of generality, we order elements in $C_i$ in non-decreasing order and reindex them as $\alpha_{i, \ell}$'s. We first remove all $\alpha_{i,\ell}$'s not in $(0,1)$, and we add $\alpha_{i,0}=0$, and $\alpha_{i,|C_i|}=1$ to $C_i$. 

\paragraph{Step 3: Solve the simplified program.} We note that within any interval $[\alpha_{i,\ell}, \alpha_{i, \ell+1})$, the order of $\eta$'s is preserved, that is, there exists a bijection $\pi_{i, \ell}: [n-1] \rightarrow A \setminus \{i\}$ such that $\eta_{\pi_{i, \ell}(1)}(\alpha, p(i)) \leq \ldots \leq \eta_{\pi_{i, \ell}(n-1)}(\alpha, p(i))$. For convenience, we set $\eta_{\pi_{i, \ell}(0)}(\alpha, p(i)) = -\infty$ and $\eta_{\pi_{i, \ell}(n)}(\alpha, p(i)) = +\infty$.

Moreover, since this is a minimization problem, we would like $p(j)$'s to be as small as possible (yet non-negative), thus, we can derive that $p(j) = \max\{0, \eta_j(\alpha, p(i))\}$ for all $j \in A \setminus \{i\}$. 

Let $k \in \{0\} \cup [n-1]$ be such that $\eta_{\pi_{i, \ell}(k)}(\alpha, p(i)) \leq 0 < \eta_{\pi_{i, \ell}(k+1)}(\alpha, p(i))$ and,  by applying Lemma~\ref{lem:submod} on the set $A \setminus \{i\}$, 
we write Program~\eqref{eq:insp} in the following simplified form:
\begin{align}
    \min_{\alpha, p(i)} \ &\alpha f(i) + p(i)\cdot v(\{i\}) + \eta_{\pi_{i, \ell}(k+1)}(\alpha, p(i)) \cdot v(A \setminus \{i, \pi_{i, \ell}(1), \ldots, \pi_{i,\ell}(k)\}) \nonumber\\
    & + \sum_{t=k+2}^{n-1} (\eta_{\pi_{i, \ell}(t)}(\alpha, p(i)) - \eta_{\pi_{i, \ell}(t-1)}(\alpha, p(i))) \cdot v(A \setminus \{i, \pi_{i, \ell}(1), \ldots, \pi_{i,\ell}(t-1)\}) \nonumber\\
    \text{s.t. } &\eta_{\pi_{i, \ell}(k)}(\alpha, p(i)) \leq 0 < \eta_{\pi_{i, \ell}(k+1)}(\alpha, p(i)) \label{eq:insp-simple}\\
    \qquad &\max\left\{\alpha_{i, \ell}, \frac{c(i)}{f(i)}\right\} \leq \alpha \leq \alpha_{i, \ell+1} \nonumber.
\end{align}
Let us observe that Program~\eqref{eq:insp-simple} is a constrained minimization problem in two variables, namely, $\alpha, p(i)$ which we are able to solve due to its specific structure. In particular, we utilize the fact that there are only two variables, and one of them only appears linearly in the objective function. 
More specifically, one can define $\beta = 1/\alpha$. Then, the program is minimizing a function of the form $C_1/\beta + C_2 \cdot p(i) + C_3\cdot \beta + C_4$ (where $C_1,C_2,C_3,C_4$ are constants) subject to $\beta, p(i)$ belonging to a polygon with four sides (the four constraints). Since $p(i)$ only appears as a linear term in the objective function, we know that the optimum will be on the edges of the polygon. Thus, for each such edge, we rewrite the objective function as minimizing a function of one variable in the form of a quadratic numerator over a linear denominator. For this minimization, there is a solution in closed form.

\paragraph{Step 4: Output.} The algorithm considers all the inspection schemes in Steps 1, 2, 3 above for each $i \in A$, $\ell \in {n-1 \choose 2}$, $k \in \{0\} \cup [n-1]$. For each, it computes, using Program~\eqref{eq:insp-simple}, the optimal $\alpha, p(i)$, and calculates the optimal inspection scheme using Lemma~\ref{lem:submod} with $p(j) =0 $ for $j \in \{\pi_{i,\ell}(1),\ldots,\pi_{i,\ell}(k)\},$  $p(j) = \eta_j(\alpha, p(i)) $ for $j \in \{\pi_{i,\ell}(k+1),\ldots,\pi_{i,\ell}(n-1)\},$ and $p(i)$.
It returns the inspection scheme with the maximal expected utility among all suggestions.

\subsection{Proof of Theorem~\ref{thm:submod}}

\paragraph{Correctness.}
It is sufficient to consider actions $i\in A$ with $f(i)>c(i)>0$, in particular, that a feasible solution for Program~\eqref{eq:insp} corresponds to a feasible solution for Program~\eqref{eq:insp-simple}, and vice versa. Consider the optimal inspection scheme that incentivizes action $i$,  which by Observation~\ref{obs:prime} is the solution $\alpha,p$ to Program~\eqref{eq:insp} (with respect to action $i$). 
We know that there exists an $\ell$ such that $\alpha \in [\alpha_{i, \ell},\alpha_{i,\ell+1})$. 
Since within $[\alpha_{i,\ell}, \alpha_{i, \ell+1})$ the order of $\eta$'s is preserved, by the structure guarantee of Lemma~\ref{lem:submod} we obtain that there exists a distribution $p^\prime$ with minimum expected cost with marginals $\max\{0,\eta_{j}\}$, for $j\neq i$, and $p(i)$ for $j=i$ (that always inspect $i$ as a singleton), and is supported by the set $\{i\}$ and on sets of the form $\{\pi_{i, \ell}(j),...,\pi_{i, \ell}(n-1)\}$ for $j\in  [n]$. 
Since $(\eta_{\pi_{i,\ell}(0)}, \ldots,\eta_{\pi_{i,\ell}(n)})$ is an increasing sequence with $\eta_{\pi_{i,\ell}(0)}=-\infty$, and $\eta_{\pi_{i,\ell}(n)}=\infty$, then there exists $k$ such that $\eta_{\pi_{i, \ell}(k)}(\alpha, p^\prime(i)) \leq 0 < \eta_{\pi_{i, \ell}(k+1)}(\alpha, p^\prime(i))$, which means that for the corresponding Program~\eqref{eq:insp-simple} with $i,\ell,k$, the constraints are satisfied (and the objective has the same value).
Overall, the optimal solution $(\alpha,p)$, corresponds to a feasible solution with the same objective value for one of the Programs~\eqref{eq:insp-simple} with the corresponding $i,\ell,k$.

We also need to show that from a feasible solution to Program~\eqref{eq:insp-simple}, we can construct a feasible solution to Program~~\eqref{eq:insp} with the same objective value.
Consider a solutions $\alpha, p(i)$ to Programs~\eqref{eq:insp-simple}, parametrized by combinations of $i, \ell, k$. For marginals of $0$ for all $j \leq k$, and of $\eta_{\pi_{i, \ell}(j)}(\alpha, p(i))$ for $j>k$, we apply Lemma~\ref{lem:submod} (on the set $A\setminus \{i\}$), and create the minimal cost distribution $p^\prime$ where $p^\prime(\{i\})=p(i)$, and for all $j \in  [n-1]$ we have $p^\prime(\pi_{i,\ell}(j))=\max\{0,\eta_{\pi_{i, \ell}(j)}(\alpha, p(i))\}$.\footnote{Observe that Lemma~\ref{lem:submod} can be applied using only the marginals $p(j)$'s instead of receiving a full distribution $p$. Moreover, to ensure that the created distribution $p^\prime$ is indeed a distribution, we can set  $p^\prime(\emptyset)$ to $1-p(i)-\max_{j\neq i} p(j)$ which is at least 0 as otherwise this would not be a feasible solution to Program~\eqref{eq:insp-simple}.}
Thus, $\alpha,p^\prime$ is a feasible solution to Program~\eqref{eq:insp} since for every $j\neq i$, we have that $p^\prime(j) \geq \eta_j(\alpha,p(i))$ (which  equivalent to  the constraint of Program~\eqref{eq:insp}). 
By Lemma~\ref{lem:submod}  $p^\prime$ has a lower expected inspection cost than $p$ and since it uses the same value of $\alpha$,
it concludes the proof.

\paragraph{Runtime and Number of Value Queries.} The algorithm presented in this section contains three nested loops: Step 1 is performed for each $i \in A$, that is $n$ times. For each Step 1, Step 2 is performed at most $O(n^2)$ times, since this is the maximum cardinality $C_i$ has. For each iteration of Step 2, Step 3 solves a program in two variables ($\alpha, p(i)$) for each possible value of $k$, i.e., $n-1$ times at most. Thus, the algorithm runs in polynomial time.

Moreover, as for the number of value queries used by the algorithm, we know from what is argued above that there are $O(n^3)$ iterations of Steps 1, 2. For each different $k$ we iterate over in Step 3, we query the value of the same sets, thereby making $n$ value queries for all the iterations of Step 3. Thus, the algorithm requires $O(n^4)$ value queries to the inspection cost function $v$ overall.

\paragraph{Support.} The support of the created distribution is of size at most $n+1$. This follows from the fact that besides adding the set $\{i\}$ to the support, the distribution is supported by the outcome of Lemma~\ref{lem:submod} on $A \setminus \{i\}$ which is at most of size $n$.

\bigskip

\noindent The theorem follows.

\section{Impossibility for XOS Inspection Cost} \label{sec:xos}
In this section, we complement our positive result from Section~\ref{sec:submod} and show that for the class of XOS inspection cost functions (the next class in the hierarchy of complement-free set functions \cite{LehmannLN06}), it is not possible to find the optimal (randomized)  inspection scheme using a polynomial number of value queries. 
Moreover, the impossibility result applies even in the case where the algorithm has access to function $v$ via demand queries (as opposed to just value queries). Formally, we prove the following:
\begin{theorem}\label{thm:impossibility}
    Every algorithm that for every instance $(A,f,c,\insp) $ with XOS inspection cost function $\insp$ over $n=|A|$ actions returns the optimal randomized inspection scheme, uses at least $\left(\frac{5}{4}\right)^n$ value or demand queries. 
\end{theorem}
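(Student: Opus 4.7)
The plan is to exhibit a family $\{v_T\}_{T\in \mathcal{T}}$ of XOS inspection cost functions, all sharing the same $A$, $f$, $c$, with $|\mathcal{T}| \geq (5/4)^n$, such that (i) the unique optimal randomized IC inspection scheme for $(A,f,c,v_T)$ determines $T$, and (ii) no deterministic algorithm making fewer than $(5/4)^n$ value or demand queries can identify $T$. Combining (i) and (ii) yields the theorem, since from the output scheme one could recover $T$ and thus distinguish all instances in the family.

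For (i), I would design $A,f,c$ so that when one writes down Program~\eqref{eq:insp} for the best-suggested action $i$, the resulting optimization admits a unique optimal distribution whose support is explicitly parameterized by $T$. Concretely, I would pick $f,c$ with enough symmetry between non-null actions that the incentive constraints impose a \emph{unique} profile of marginal inspection probabilities across $A\setminus\{i\}$, and then arrange $v_T$ so that among all distributions realizing this marginal profile, the minimum expected cost is attained by a unique distribution concentrated on sets that reveal $T$. This gives injectivity of the map $T\mapsto$ optimal scheme.

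For (ii), I would set $v_T(S)=\max\{\gamma_0(S),\gamma_T(S)\}$, where $\gamma_0$ is a fixed additive ``baseline'' and $\gamma_T$ is an additive function with elevated weights on the coordinates in $T$, tuned so that $\gamma_T(S)>\gamma_0(S)$ exactly when $S$ lies in a narrow family $\mathcal{F}_T$ sharply concentrated around $T$ (for instance, sets $S$ with $|S\cap T|$ exceeding a threshold). Taking $\mathcal{T}$ to be a sufficiently dense collection of subsets of $A$ of an appropriately chosen size, standard counting arguments ensure that any single set $S$ lies in $\mathcal{F}_T$ for at most a $(4/5)^n$-fraction of $T\in\mathcal{T}$, and any price vector $q$ has $v_T$-demand equal to $\gamma_0$-demand for all but a $(4/5)^n$-fraction of $T$'s. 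An adversary that answers every value (resp.\ demand) query using $\gamma_0$ therefore eliminates at most $(4/5)^n\cdot|\mathcal{T}|$ candidates per query, so after any $k<(5/4)^n$ queries at least two $T$'s remain consistent with the transcript, contradicting any purported algorithm distinguishing them.

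The main obstacle is making (i) and (ii) compatible. The family must be simultaneously rich enough for the optimizer of Program~\eqref{eq:insp} to encode $T$ injectively, yet have only two XOS clauses so that membership in $\mathcal{F}_T$ is controllable by simple combinatorial counting, and it must remain robust under demand queries, which can probe marginals more aggressively than value queries. The most delicate step is verifying the demand-query case: I must tune the gap between the weights of $\gamma_T$ and $\gamma_0$ so that for every price vector $q$, the $\gamma_0$-optimal bundle remains the true $v_T$-demand for all but an exponentially small fraction of $T$'s, while still preserving enough asymmetry for the optimal inspection scheme to single out $T$. This balancing act, and the explicit verification that the program's optimum is unique given $v_T$, are the technical heart of the proof.
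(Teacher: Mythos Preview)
Your high-level plan matches the paper's: build a family $\{v_T\}$ so that the unique optimal scheme encodes $T$, then argue indistinguishability under oracle access. The concrete construction you propose, however, has a genuine gap on the demand-query side. With a two-clause XOS $v_T=\max\{\gamma_0,\gamma_T\}$ where $\gamma_T$ has elevated additive weights on the coordinates of $T$, a \emph{single} demand query reveals $T$. If $\gamma_T(j)=\gamma_0(j)+\varepsilon\cdot\ind{j\in T}$, set $q_j=\gamma_0(j)+\varepsilon/2$ for all $j$; then $\gamma_0(S)-q(S)<0$ for every nonempty $S$, while $\gamma_T(S)-q(S)$ is maximized at $S=T$ with value $|T|\varepsilon/2>0$, so $D(v_T,q)=T$. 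No tuning of $\varepsilon$ helps, because the attack only requires the price vector to separate the two weight profiles coordinate-wise. Your adversary that ``answers with $\gamma_0$'' therefore cannot be correct on this query for \emph{any} $T$, and the claimed $(4/5)^n$ fraction bound fails. A second, related issue: elevating weights on $T$ makes $T$-correlated sets \emph{more} expensive, so the cost-minimizing distribution over sets will avoid them rather than concentrate on them, which works against the injectivity you need in part (i).

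The paper's construction resolves both points by going in the opposite direction and using exponentially many clauses. It fixes $|T|=\lceil 4k/5\rceil$ and defines $v_T$ so that the $k$ cyclic shifts of $T$ (together with a designated element $x$) are slightly \emph{cheaper} than all other sets of that size; the extra clauses are one additive function per large non-shift set, which is what forces the XOS representation to be exponential. The IC constraints pin down the marginals, and the unique cheapest distribution realizing them must put mass exactly on $\{x\}\cup S$ for $S\in\mathrm{cyclic}(T)$, using primality of $k$ to get uniqueness. For the lower bound, the paper shows that every demand query on $v_T$ can be answered after a single value query on the set $\{\pi(1),\dots,\pi(|T|)\}$ determined by the price order $\pi$; since only $k$ of the $\binom{k}{\lceil 4k/5\rceil}$ size-$|T|$ sets are special for a random $T$, each query rules out at most $k$ candidates, and a counting argument gives the $(5/4)^n$ bound. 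The step you flagged as ``most delicate'' is thus not a matter of tuning a gap but of engineering $v_T$ so that demand queries collapse to value queries on a single candidate set---and that is exactly what a two-clause design cannot do.
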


To show Theorem~\ref{thm:impossibility}, we construct a family of instances parameterized by a random hidden set $T$ such that, for every instance, there exists a \emph{unique} principal's utility maximizing randomized inspection scheme, and there is a one-to-one correspondence between $T$ and this unique best inspection scheme. Additionally, using less than an exponential number of value and demand queries, one cannot extract the hidden parameter of the instance (set $T$). 

\subsection{Hard Instance Construction} 
Let us consider an instance with $n = k+3$ actions for a $k>5$ being a prime number, where we name the actions $A = \{\bot, g, x, 1, \ldots, k\}$. We now define a family of instances $\inst_T=(A, f, c, v_T)$ parametrized by a random set $T \subseteq A$, where $|T| = \left\lceil\frac{4k}{5}\right\rceil$, and such that: (1) $f(\bot) = 0, ~f(g) = 1, ~f(x) = \frac{3}{10}, ~f(i) = \frac{1}{5} ~\forall i \in [k]$, (2) $ c(\bot)=0$, $c(g) = \frac{1}{10}, ~c(x) = \frac{1}{100}, ~c(i) = \frac{1}{100} ~\forall i \in [k]$. We next describe the inspection cost function\footnote{We note that since $k > 5$ is a prime number, then $4k/5$ is always non-integer.}:
\begin{align}\label{eq:xos}
    v_T(S) \eqdef \frac{\ind{S \setminus \{\bot, g\} \neq \emptyset}}{40} &+ \ind{\bot \in S } +\ind{g \in S } \\
    &+ \frac{1}{80k} \cdot \begin{cases} 0 & \text{ if } |S|=0\\ 
        1 &\text{ if } 0<|S \setminus\{\bot, g, x\}| < \frac{4k}{5} \text{ or } S \setminus\{\bot, g, x\} \in \cyclic(T) \\
        2 &\text{ if } |S \setminus\{\bot, g, x\}| > \frac{4k}{5} \text{ and } S \setminus\{\bot, g, x\} \notin \cyclic(T)
    \end{cases},\nonumber
\end{align}
where $\cyclic(T) \eqdef \left\{ T_t \mid t\in [k] \right\}$, and $T_t \eqdef \{((j +t) \mod k)+1 \mid j \in T\}$,
 is defined to be the collection of all sets obtained by cyclic shifts of elements in $T$.

We now show that these inspection cost functions are XOS:

\begin{lemma}\label{lem:xos}
    For every $T \subseteq A$ with $|T| = \left\lceil\frac{4k}{5}\right\rceil$, $v_T$ is XOS.
\end{lemma}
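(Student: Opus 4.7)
My plan is to exhibit an explicit XOS representation of $v_T$. First, I write $v_T(S)=\mathbbm{1}[\bot\in S]+\mathbbm{1}[g\in S]+w(S)$ where $w(S)\eqdef \frac{\mathbbm{1}[S\setminus\{\bot,g\}\neq\emptyset]}{40}+\frac{h(S)}{80k}$; since adding an additive function preserves the XOS property (take the max over the sum clause-by-clause), it suffices to show that $w$ is XOS. Observe that $w$ depends only on $S\cap B$ with $B=\{x,1,\ldots,k\}$, and, setting $m\eqdef\lceil 4k/5\rceil$, $w$ takes only three nonzero values: $\frac{1}{40}$ when $\emptyset\neq S\cap B\subseteq\{x\}$; $\frac{2k+1}{80k}$ when $0<|S\cap[k]|<m$ or $S\cap[k]\in\cyclic(T)$; and $\frac{k+1}{40k}$ when $|S\cap[k]|\geq m$ and $S\cap[k]\notin\cyclic(T)$.

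I then introduce three families of additive clauses on $A$, each placing weight $1$ on both $\bot$ and $g$: (a) a single clause $\gamma_x$ with $\gamma_x(x)=\frac{1}{40}$ and $\gamma_x(j)=0$ for $j\in[k]$; (b) for each $j\in[k]$, a clause $\gamma_j$ with $\gamma_j(j)=\frac{2k+1}{80k}$ and zero on the rest of $B$; (c) for each $R\subseteq[k]$ with $|R|\geq m$ and $R\notin\cyclic(T)$, a clause $\gamma_R$ with $\gamma_R(j)=\frac{k+1}{40k|R|}$ for $j\in R$ and zero on the rest of $B$. Tightness of the claimed representation $v_T(S)=\max_\ell\gamma_\ell(S)$ is immediate by matching a clause to each $S$: use $\gamma_x$ whenever $S\cap[k]=\emptyset$; any $\gamma_j$ with $j\in S\cap[k]$ whenever $w(S)=\frac{2k+1}{80k}$; and $\gamma_{S\cap[k]}$ whenever $w(S)=\frac{k+1}{40k}$ (since then $\gamma_{S\cap[k]}(S\cap[k])=|S\cap[k]|\cdot\frac{k+1}{40k|S\cap[k]|}=\frac{k+1}{40k}$).

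The hard part is verifying the domination $\gamma_\ell(S)\leq v_T(S)$ for every remaining $(\ell,S)$; the only nontrivial cases concern the clauses $\gamma_R$ evaluated either on a cyclic shift $T_t$ or on a subset of $[k]$ of size less than $m$. The key observation is that, since $R\notin\cyclic(T)$: if $|R|=m$ then $R\neq T_t$ forces $|R\cap T_t|\leq m-1$; and if $|R|>m$ then trivially $|R\cap T_t|\leq m$. Substituting into $\gamma_R(T_t)=|R\cap T_t|\cdot\frac{k+1}{40k|R|}$ and requiring $\leq\frac{2k+1}{80k}$, clearing denominators reduces the claim to either $(m-1)(2k+2)\leq m(2k+1)$ or $m(2k+2)\leq(m+1)(2k+1)$, which simplify to $m\leq 2k+2$ and $m\leq 2k+1$ respectively, both holding since $m<k$. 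The same two inequalities cover the upper bounds $\gamma_R(S)\leq v_T(S)$ for $0<|S|<m$ via $|S\cap R|\leq|S|\leq m-1$, and the remaining bounds on $\gamma_x$ and $\gamma_j$ follow from the elementary ordering $\frac{1}{40}<\frac{2k+1}{80k}<\frac{k+1}{40k}$. Combining tightness with domination yields $v_T=\max_\ell\gamma_\ell$, so $v_T$ is XOS.
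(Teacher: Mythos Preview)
Your proof is correct and takes essentially the same approach as the paper: the three families of additive clauses $\gamma_x$, $\gamma_j$, and $\gamma_R$ you introduce are identical to the paper's (up to notation, with $\frac{2k+1}{80k}=\frac{1}{40}+\frac{1}{80k}$ and $\frac{k+1}{40k}=\frac{1}{40}+\frac{1}{40k}$), and the tightness/domination verification follows the same outline. Your case split for the domination of $\gamma_R$ on cyclic shifts (distinguishing $|R|=m$ from $|R|>m$) is in fact a bit more explicit than the paper's corresponding step.
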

\begin{proof}
    Let us consider the following collection $\Gamma$ of additive functions:
    \begin{itemize}
        \item $\gamma_x(S) \eqdef \frac{1}{40} \cdot \ind{x \in S} + \ind{\bot \in S} + \ind{g \in S}$, defined for action $x$;
        \item $\gamma_i(S) \eqdef \left(\frac{1}{40} + \frac{1}{80k}\right) \cdot \ind{i \in S} + \ind{\bot \in S} + \ind{g \in S}$, defined for each action $i \in [k]$;
        \item $\gamma_{S^\prime}(S) \eqdef \left(\frac{1}{40} + \frac{1}{40k}\right) \cdot \frac{|S \cap S^\prime|}{|S^\prime|} + \ind{\bot \in S} + \ind{g \in S}$, defined for each subset of actions $S^\prime \subseteq [k]$ such that $|S^\prime| > \frac{4k}{5}$ and $S^\prime \notin \cyclic(T)$.
    \end{itemize}
    We now show that (1) for all $S \subseteq A$ and all functions $\gamma \in \Gamma$, $v_T(S) \geq \gamma(S)$, and (2) for all $S \subseteq A$, there exists a function $\gamma \in \Gamma$ such that $v_T(S) = \gamma(S)$. This trivially holds for $S = \emptyset$. Moreover, for both claims, since the summands $\ind{\bot \in S }, \ind{g \in S }$ are common between $v_T$ and $\gamma$'s, we can ignore them, i.e., consider $v^\prime_T$ and $\gamma^\prime$ which are respectively obtained from $v_T$ and $\gamma$ by subtracting the term $\ind{\bot \in S} + \ind{g \in S}$, and consider a new instance without actions $\bot, g$.

    For (1), we observe that, for $S=\{x\}$, 
    it holds that for every $\gamma^\prime 
    \in \Gamma^\prime$, $ \gamma^\prime(S)\leq \frac{1}{40} = v^\prime_T(S)$.
    For all $S \subseteq A\setminus\{\bot, g\}$, such that  $v^\prime_T(S) = \frac{1}{40}+\frac{1}{80k}$, then  $\max\{\gamma^\prime_x(S), \max_{i \in [k]} \gamma^\prime_i(S)\} \leq \frac{1}{40}+\frac{1}{80k} = v^\prime_T(S)$. Also, for every $\gamma^\prime$ of the third type defined by a set $S^\prime$, it holds that $S^\prime\setminus S\neq \emptyset$. Therefore, $\gamma^\prime_{S^\prime}(S) \leq (\frac{1}{40}+\frac{1}{40k})\cdot\frac{|S^\prime|-1}{|S^\prime|} \leq (\frac{1}{40}+\frac{1}{40k})\cdot\frac{\lceil\frac{4k}{5}\rceil-1}{\lceil\frac{4k}{5}\rceil} \leq v_T^\prime(S)$.   
    For $S$ such that $v^\prime_T(S) = \frac{1}{40}+\frac{1}{40k}$, then for all $\gamma^\prime \in \Gamma^\prime$, it holds that $\gamma^\prime(S) \leq \frac{1}{40}+\frac{1}{40k}$, which concludes the proof of (1).
    
    For (2), if $S = \{x\}$ then $\gamma^\prime_x(S) = v^\prime_T(S)$.   Otherwise, if $S$ is such that $v^\prime_T(S) = \frac{1}{40} + \frac{1}{80k}$, then it means that there exists some action  $i \in S \cap [k]$,  therefore, $\gamma^\prime_i(S) = v^\prime_T(S)$. If instead, $v^\prime_T(S) = \frac{1}{40} + \frac{1}{40k}$, then it means that $S$ is such that $|S \setminus\{x\}| > \frac{4k}{5} \text{ and } S \setminus\{x\} \notin \cyclic(T)$, therefore $\gamma^\prime_{S}$ exists (in $\Gamma^\prime$), thus,  $\gamma^\prime_{S}(S) = \frac{1}{40} + \frac{1}{40k}$, which concludes the proof of (2).
\end{proof}

\subsection{Proof of Theorem~\ref{thm:impossibility}}

We next prove Theorem~\ref{thm:impossibility}, as per the following lemmas.

\begin{lemma}\label{lem:unique}
    For every $T \subseteq A$ with $|T| = \left\lceil\frac{4k}{5}\right\rceil$, the unique principal's utility maximizing randomized inspection scheme of instance $\cI_T$ is $\left(g, \alpha=\frac{c(g)}{f(g)}, p\right)$, where $p(S \cup \{x\}) = \frac{1}{2|T|}$, for $S \in \cyclic(T)$, $p(\{x\}) = \frac{2}{3} - \frac{k}{2|T|}$, and $p(\emptyset) = \frac{1}{3}$.
\end{lemma}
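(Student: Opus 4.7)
The plan is to pin down the optimal scheme in three steps: first, reduce to suggesting $g$ with payment $\alpha = c(g)/f(g) = 1/10$ and tight IC marginals $p(x) = 2/3$, $p(i) = 1/2$; second, set up the resulting LP over $p(S)$ for $S \subseteq \{x\} \cup [k]$; third, use LP duality together with the primality of $k$ to prove the candidate scheme is the unique optimum.

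For the first step, I will invoke \Cref{obs:prime} and the fact that $v(\{\bot\}) = v(\{g\}) = 1$ to argue that $p$ may be taken to place no mass on sets containing $\bot$ or $g$: including $\bot$ is useless since $f(\bot) = 0$ makes $\bot$'s IC trivial, while every unit of $p(g)$ costs $1$ in inspection, whereas the same IC relaxation via $p(x)$ or $p(i)$ costs at most $1/40 + 1/(80k)$. The IC constraints for $x$ and $i \in [k]$ then reduce (using $f(g) = 1$) to $p(x) \geq (9/10 - 7\alpha)/(3\alpha)$ and $p(i) \geq (9/20 - 4\alpha)/\alpha$, combined with $\alpha \geq 1/10$ for $\bot$. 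Using the cyclic-$T$ structure I will derive in the third step, differentiating $\alpha + \E{v(S)}$ at $\alpha = 1/10$ gives $1/4 - 9/(16|T|) > 0$ for $|T| \geq 4k/5$, $k \geq 5$; since this derivative is monotone increasing in $\alpha$ and the boundary values at $\alpha \in \{9/80, 9/70\}$ are strictly worse, $\alpha = 1/10$ is uniquely optimal and both marginals are tight. Finally, comparing suggested actions, the $g$-utility of roughly $9/10 - 1/60$ strictly exceeds the utilities $\leq f(x) = 3/10$ and $\leq f(i) = 1/5$ attainable under any other suggestion.

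For steps two and three, the primal LP is $\min \sum_S p(S) v(S)$ subject to $p(x) = 2/3$, $p(i) = 1/2$ for $i \in [k]$, $\sum_S p(S) = 1$, $p \geq 0$. Guided by the candidate support $\{\emptyset, \{x\}\} \cup \{T_t \cup \{x\} : t \in [k]\}$, I will propose dual variables $z = 0$, $y_x = 1/40$, and $y_i = 1/(80k|T|)$ for each $i \in [k]$; the dual constraints $z + y_x \ind{x \in S} + \sum_{i \in S \cap [k]} y_i \leq v(S)$ then reduce to $\ind{x \in S}/40 + |S \cap [k]|/(80k|T|) \leq v(S)$, which I will verify holds with equality for $S$ in the candidate support and strictly otherwise (using $|T| > 4k/5$ and $|S \cap [k]| \leq k < 2|T|$ to handle the expensive case). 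Complementary slackness will then confine the primal support to the candidate set; restricted to this support, the equations $\sum_{t: i \in T_t} p(T_t \cup \{x\}) = 1/2$ form a circulant linear system in $(p(T_t \cup \{x\}))_t$ with matrix $M_{i,t} = \ind{i \in T_t}$, whose eigenvalues are $\sum_{l \in T} \omega^{lj}$ for $\omega = e^{2\pi \mathrm{i}/k}$. Because $k$ is prime and $0 < |T| < k$, multiplication by any nonzero $j \in \mathbb{Z}/k$ permutes $T$ to a set of the same nontrivial size, so each eigenvalue is nonzero (the only $\mathbb{Q}$-linear relation among the $k$-th roots of unity is $\sum_j \omega^j = 0$); hence $M$ is invertible, the unique solution is $p(T_t \cup \{x\}) = 1/(2|T|)$ for all $t$, and the remaining two constraints force $p(\{x\}) = 2/3 - k/(2|T|)$ and $p(\emptyset) = 1/3$.

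The main obstacle I expect is the careful dual-feasibility check---verifying the dual inequality for every possible set $S$, especially those with $x \in S$ and various size profiles of $S \cap [k]$ spanning the three extra-cost regimes in~\eqref{eq:xos}---and ensuring the tight constraints are exactly those at the candidate support. The circulant-invertibility step is standard once primality of $k$ is invoked; otherwise $T$ could be chosen as a union of cosets of a proper subgroup of $\mathbb{Z}/k$, making the circulant singular and breaking uniqueness.
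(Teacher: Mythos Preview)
Your proposal is essentially correct and reaches the same endpoint as the paper, but by a genuinely different route for the inspection-cost minimization. The paper partitions the probability mass on subsets of $[k]$ into three buckets $y_1,y_2,y_3$ (small sets, sets in $\cyclic(T)$, large sets not in $\cyclic(T)$), then argues directly that $y_1=y_3=0$ at the optimum and solves the resulting one-dimensional problem in $\alpha$. You instead fix the marginals, set up the cost-minimization LP over $p(S)$, and certify optimality of the candidate via an explicit dual $(z,y_x,y_i)=(0,\tfrac{1}{40},\tfrac{1}{80k|T|})$; complementary slackness then confines the support, after which both proofs use the same circulant argument (with $k$ prime) for uniqueness. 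Your duality approach is tidier for identifying the support and makes the ``strictly otherwise'' verification systematic; the paper's bucket argument is more elementary but involves several ad hoc substitutions.

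One point deserves care in your write-up: the first step, as phrased, sounds circular---you invoke ``the cyclic-$T$ structure I will derive in the third step'' to differentiate $\alpha+\E{v(S)}$ and pin down $\alpha=\tfrac{1}{10}$. The clean fix is to observe that your dual is feasible regardless of the right-hand sides $m_x(\alpha),m(\alpha)$, so it furnishes a lower bound $L(\alpha)=\alpha+\tfrac{m_x(\alpha)}{40}+\tfrac{m(\alpha)}{80|T|}$ on the true cost $C(\alpha)$ for every $\alpha\in[\tfrac{1}{10},\tfrac{9}{80}]$, with equality at $\alpha=\tfrac{1}{10}$. Then your derivative computation shows $L'(\tfrac{1}{10})=\tfrac{1}{4}-\tfrac{9}{16|T|}>0$ and $L$ convex, hence $C(\alpha)\ge L(\alpha)>L(\tfrac{1}{10})=C(\tfrac{1}{10})$ for $\alpha>\tfrac{1}{10}$ in this range; the boundary checks at $\alpha\in\{\tfrac{9}{80},\tfrac{9}{70}\}$ then finish the job. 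Also, state the primal marginal constraints as inequalities $p(x)\ge \tfrac{2}{3}$, $p(i)\ge\tfrac{1}{2}$ rather than equalities; tightness at optimum then follows from complementary slackness (since $y_x,y_i>0$), which is exactly what you need for uniqueness.
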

\begin{proof}
    In order to show that the unique optimal inspection scheme is $\cI_T$, we first need to show that it is never more profitable for the principal to incentivize some other action $j$ over action $g$.
    To do so,  we bound the utility that the principal can extract from any other action, by the social welfare induced by this action, ignoring incentive-compatibility constraints. In particular, observe that the social welfare of actions $\bot$, $x$ or $i \in [k]$ is respectively $f(\bot) - c(\bot) = 0$, $f(x) - c(x) = \frac{29}{100}$ and $f(i) - c(i) = \frac{19}{100}$. 

    On the other hand, under inspection scheme $\cI_T$, we have that $p(g) = 0, p(\bot) = 0, p(x) = \frac{2}{3}$ and $p(i) = \frac{1}{2}$. Thus, the agent's best response is to take action $g$ since
    \begin{align*}
        u_\agent\left(g, \frac{c(g)}{f(g)}, p, \bot\right) &= f(\bot) \cdot \frac{c(g)}{f(g)} \cdot (1 - p(\bot) - p(g)) - c(\bot) = 0\\
        u_\agent\left(g, \frac{c(g)}{f(g)}, p, x\right) &= f(x) \cdot \frac{c(g)}{f(g)} \cdot (1 - p(x) - p(g)) - c(x) = \frac{1}{100} - \frac{1}{100} = 0\\
        u_\agent\left(g, \frac{c(g)}{f(g)}, p, i\right) &= f(i) \cdot \frac{c(g)}{f(g)} \cdot (1 - p(i) - p(g)) - c(i) = \frac{1}{100} - \frac{1}{100} = 0, ~\forall~i \in [k]\\
        u_\agent\left(g, \frac{c(g)}{f(g)}, p, g\right) &= f(g) \cdot \frac{c(g)}{f(g)} - c(g) = 0,
    \end{align*}
    thus, the agent does not strictly favor any other action over  $g$. Therefore, the principal's utility is
    \begin{align*}
        u_{\principal}(g, \alpha, p,g) &= \left(1 - \frac{c(g)}{f(g)}\right) - \left(p(\{x\}) \cdot v(\{x\}) + \sum_{S \in \cyclic(T)} p(S \cup \{x\}) \cdot v(S \cup \{x\})\right)\\
        &= \frac{9}{10} - \left(\frac{2}{3} - \frac{k}{2|T|}\right)\cdot\frac{1}{40} - \frac{k}{2|T|}\left(\frac{1}{40} + \frac{1}{80k}\right) = \frac{53}{60} - \frac{1}{160 \cdot |T|}.
    \end{align*}
    From above, we see that incentivizing $g$ through inspection scheme $\cI_T$ is always more profitable than incentivizing any other action $j \neq g$, since $\frac{53}{60} - \frac{1}{160 \cdot |T|} > \frac{29}{100} > \frac{19}{100} > 0$ (respectively, the principal's utilities arising by incentivizing actions $x, i\in [k], \bot$).

    Our next step is to show that no other inspection scheme $\cI^\prime_T \eqdef (g, \alpha^\prime, p^\prime) \neq \cI_T$ can be optimal. We can assume that $\alpha^\prime \geq \frac{c(g)}{f(g)}$, as otherwise $g$ would not be incentivized since the agent's utility from choosing action $g$ is strictly negative, but from selecting $\bot$ is $0$. 
    We can also assume that $p^\prime(\bot) = 0$ because it is never profitable to inspect the null action (since it has a strictly positive cost, and the constraint derived from $\bot$ is that the agent's utility from $g$ is non-negative). Moreover, it must be the case that $p^\prime(\{g\})=0$, or else there would exist a distribution $p^{\prime\prime}$ such that $p^{\prime\prime}(\{g\})=0$, $p^{\prime\prime}(\{x\} \cup [k]) = p^{\prime}(\{x\} \cup [k]) + p^\prime(\{g\})$ and $p^{\prime\prime}(S) = p^\prime(S)$ for all other sets, that still incentivizes $g$ but yields better utility to the principal (since $v_T(\{x\}\cup [k]) < v_T(\{g\})$), contradicting optimality of $\cI^\prime_T$. From the incentive-compatibility constraints, we derive that
    \begin{align}\label{eq:constraint-x}
        p^\prime(x) &\geq 1 - \frac{\alpha^\prime f(g)-c(g)+c(x)}{\alpha^\prime f(x)}\\
        p^\prime(i) &\geq 1 - \frac{\alpha^\prime f(g)-c(g)+c(i)}{\alpha^\prime f(i)},~\forall~i \in [k] \nonumber.
    \end{align}
    Therefore, since $f(i), c(i)$ are constant across all actions $i \in [k]$, we obtain
    \begin{align}\label{eq:constraint}
        &k \cdot \left(1 - \frac{\alpha^\prime f(g)-c(g)+c(i)}{\alpha^\prime f(i)}\right) \leq \sum_{i \in [k]} p^\prime(i) = \sum_{i \in [k]}\sum_{S \subseteq A: i \in S} p^\prime(S) \nonumber\\
        &\leq (|T|-1) \cdot \underbrace{\sum_{S \subseteq A: 0 < |S \cap [k]| < \frac{4k}{5}} p^\prime(S)}_{=: ~y_1} +~ |T| \cdot \underbrace{\sum_{S \subseteq A: S \cap [k] \in \cyclic(T)} p^\prime(S)}_{=: ~y_2} +~ k \cdot \underbrace{\sum_{\substack{S \subseteq A: |S \cap [k]| \geq \frac{4k}{5}\\ S \cap [k] \notin \cyclic(T)}} p^\prime(S)}_{=: ~y_3},
    \end{align}
    where the last inequality follows from the definition of marginal probabilities, from partitioning sets into the respective types and upper bounding their cardinalities. We observe that (1) if the principal inspects  $x$ it pays at least $\frac{1}{40}$; (2) if the principal decides to inspect actions from $[k]$, then she pays an additional term that depends on the set of either $\frac{1}{80k}$ or $\frac{1}{40k}$. All in all, the principal's cost (payment plus the inspection cost) is at least
    \begin{align}\label{eq:objective}
        \alpha^\prime f(g) + \frac{1}{40} \cdot p^\prime(x) + \frac{1}{80k} \cdot (y_1 + y_2) + \frac{1}{40k} \cdot y_3.
    \end{align}
    In order to lower bound the principal's cost, we simply need to minimize \eqref{eq:objective} subject to constraints in \eqref{eq:constraint-x} and \eqref{eq:constraint} as well as the constraints of $\alpha' \geq \frac{c(g)}{f(g)}$, $y_1+y_2+y_3\leq 1$ and $0\leq p'(x),y_1,y_2,y_3\leq 1$.
    Let us simplify \eqref{eq:objective} as follows: we first note that in the minimizing solution $y_1 = 0$, as otherwise if one defined $y_1^\prime=0$ and $y_2^\prime=y_2 + \frac{|T|-1}{|T|} \cdot y_1$, $(\alpha^\prime, p^\prime(i), 0, y^\prime_2, y_3)$ would still be feasible and the objective evaluated on this tuple would be strictly smaller than the original one. To show the same with respect to $y_3$, we observe that for $y^\prime_2=\min\{ \frac{k}{2|T|}, y_2 + y_3 \cdot \frac{k}{|T|}\}$, and $y_3^\prime=0$, condition \eqref{eq:constraint} holds since either $y_2^\prime =\frac{k}{2|T|}$, and \eqref{eq:constraint} holds since $\alpha^\prime \geq \frac{1}{10}$ or $y_2^\prime=y_2 + y_3 \cdot \frac{k}{|T|}$ and \eqref{eq:constraint} holds since $y_2^\prime \cdot|T| = y_2 \cdot |T| + y_3 \cdot k$.
    It also holds that $y_2^\prime \leq 1$, and thus, for $y_2^\prime= y_2+y_3 \cdot \frac{k}{|T|} \leq 1$, it holds that $(\alpha^\prime, p^\prime(i), 0, y^\prime_2, 0)$ would still be feasible and the objective evaluated on this tuple would be strictly smaller than the original one.

    Now, we observe that either $p^\prime(x)=0$, which implies that $\alpha^\prime \cdot f(g) -c(g) \geq \alpha^\prime f(x)-c(x)$, thus $\alpha^\prime \geq \frac{9}{70}$, and $u_\principal(g,\alpha^\prime,p^\prime,g) \leq (1-\alpha^\prime)f(g) \leq \frac{61}{70}$ which is strictly less than the utility of $\cI_T$. Otherwise ($p^\prime(x) > 0$), and \eqref{eq:constraint-x} can be replaced with equality, as we could lower the value of $p^\prime(x)$ (up to equality or $0$) and still be feasible. Similarly, once $y_1=y_3=0$, either $y_2=0$ which implies that $\alpha^\prime \cdot f(g)-c(g) \geq \alpha^\prime f(i) -c(i)$, thus $\alpha^\prime \geq \frac{9}{80}$, and $u_\principal(g,\alpha^\prime,p^\prime,g) \leq (1-\alpha^\prime ) f(g) \leq \frac{71}{80}$, which is strictly less than the utility of $\cI_T$. Thus, also \eqref{eq:constraint} can be replaced with equality. Hence, we can write the objective function in \eqref{eq:objective} as
    \begin{align*}
        &\alpha^\prime f(g) + \frac{1}{40} \cdot \left(1 - \frac{\alpha^\prime f(g)-c(g)+c(x)}{\alpha^\prime f(x)}\right) + \frac{1}{80k} \cdot \frac{k}{|T|} \cdot \left(1 - \frac{\alpha^\prime f(g)-c(g)+c(i)}{\alpha^\prime f(i)}\right) \\
        &= \alpha^\prime + \frac{1}{40} \cdot \left(1 - 10\frac{\alpha^\prime -1/10+1/100}{3\alpha^\prime }\right) + \frac{1}{80} \cdot \frac{1}{|T|} \cdot \left(1 - 5\frac{\alpha^\prime-1/10+1/100}{\alpha^\prime}\right)\\
        &= \alpha^\prime + \left( \frac{9}{1200} + \frac{9}{1600\cdot |T|}\right) \cdot \frac{1}{\alpha^\prime} - \frac{7}{120} - \frac{33}{560\cdot |T|}.
    \end{align*}
    This is minimized, for every $|T|$, at $\alpha^\prime = \frac{1}{10}$. In summary, the unique optimal solution is $\alpha^\prime = \frac{1}{10},~p^\prime(x) = \frac{2}{3},~y_1=y_3=0,~y_2=\frac{k}{2|T|}$.
    
    To achieve this minimal cost, the principal must inspect a non-empty set with a probability of exactly $p'(x)$ (otherwise, the lower bound of \eqref{eq:objective} is not tight), which means that all non-empty sets inspected $S$ with $p^\prime(S)>0$ must contain $x$.  
    
    It also must be that each action $i\in [k]$ is inspected with probability $p^\prime(i) \geq \frac{1}{2}$. 
    For every $S\in \cyclic(T)$, let $P_{S}$ be $p^\prime(S \cup \{x\})$.
    To encode $p^\prime(i) \geq \frac{1}{2}$ for all $i \in [k]$, we get the following system of $k$  inequalities in $k$ variables $P_{S}$'s, together with the equality encoding $\sum_{S \in \cyclic(T)} p^\prime(S \cup \{x\}) = \frac{k}{2|T|}$:
    \begin{align*}
        \text{For all } i\in [k],~ \sum_{S\in \cyclic(T): i\in S}P_{S}  &\geq \frac{1}{2}\\
        \sum_{S\in \cyclic(T)} P_{S} &= \frac{k}{2\cdot |T|}.
    \end{align*}
    Since $k$ is a prime number larger than $5$ (thus $0<|T|<k$), the binary vectors representing $P_S$ for $S\in\cyclic(T)$ are independent vectors (over $\R$) thus, the unique solution is where for all $S\in \cyclic(T)$, $P_{S} = \frac{1}{2|T|}$, and therefore $p^\prime(S \cup \{x\}) = \frac{1}{2|T|}$.
    
    To summarize, in the optimal inspection $\cI^\prime_T$, in order to incentivize action $g$, the principal has to inspect sets such that their intersection with the $k$ bad actions belongs to the cyclic shifts of $T$, i.e., $p^\prime(\{x\}) = \frac{2}{3} - \frac{k}{2|T|}$, $p^\prime(\emptyset) = \frac{1}{3}$ and $p^\prime(S \cup \{x\}) = \frac{1}{2|T|}$ for all $S \in \cyclic(T)$. This implies that $\cI^\prime_T = \cI_T$, which is a contradiction.

    $\cI_T$ is, thus, the unique principal's utility maximizing (randomized)  inspection scheme.
\end{proof}

\begin{observation}\label{obs:hard}
    Provided that we know what sets belong to the collection $\cyclic(T)$, we can compute the unique optimal inspection scheme $\cI_T$, inspecting each of the $S \in \cyclic(T)$ with probability exactly $\frac{1}{2|T|}$. The vice versa also holds. Hence, finding any set $S \in \cyclic(T)$ via value or demand queries implies finding the unique optimal inspection scheme $\cI_T$ and vice versa.
\end{observation}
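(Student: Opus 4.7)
The plan is to verify the three claims in \Cref{obs:hard} as a direct consequence of the explicit description of $\cI_T$ provided by \Cref{lem:unique}. The forward direction (knowing $\cyclic(T)$ yields $\cI_T$) is essentially immediate: \Cref{lem:unique} gives a closed-form expression for the unique optimal inspection scheme in terms of $\cyclic(T)$, namely $p(S \cup \{x\}) = \tfrac{1}{2|T|}$ for each $S \in \cyclic(T)$, $p(\{x\}) = \tfrac{2}{3} - \tfrac{k}{2|T|}$, $p(\emptyset) = \tfrac{1}{3}$, suggested action $g$, and $\alpha = \tfrac{c(g)}{f(g)} = \tfrac{1}{10}$. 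So given $\cyclic(T)$, we just instantiate this formula.

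For the reverse direction, I would argue that $\cyclic(T)$ can be read off from the support of the inspection distribution in $\cI_T$. Concretely, by \Cref{lem:unique}, the support of $p$ consists exactly of $\emptyset$, $\{x\}$, and the $k$ sets $\{S \cup \{x\} : S \in \cyclic(T)\}$. Hence if one knows $\cI_T$, removing $\{x\}$ from every set of size at least two in the support recovers $\cyclic(T)$.

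For the last sentence, I would note that if an algorithm succeeds in producing \emph{any} single set $S \in \cyclic(T)$ (via value or demand queries), then all of $\cyclic(T)$ is determined: one simply applies the $k$ cyclic shifts to $S$ (in $A \setminus \{\bot, g, x\} = [k]$) to obtain every element of $\cyclic(T)$, since $\cyclic(T)$ is by construction closed under these shifts and has cardinality at most $k$. Once $\cyclic(T)$ is known, the forward direction constructs $\cI_T$. Conversely, given $\cI_T$, the reverse direction recovers $\cyclic(T)$, and in particular outputs sets $S \in \cyclic(T)$. This establishes the claimed equivalence.

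The only subtle point will be to phrase carefully the cyclic-shift reconstruction step, making sure to use that $\cyclic(T)$ is exactly the orbit of $T$ under cyclic shifts of $[k]$ as defined in \Cref{eq:xos}; no other calculation is required. I expect this to be a short observation-style proof, and the only conceptual content is pointing to \Cref{lem:unique} for the forward direction and to closure of $\cyclic(T)$ under shifts for the reverse direction.
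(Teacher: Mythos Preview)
Your proposal is correct and matches the paper's intent: the paper states \Cref{obs:hard} without proof, treating it as an immediate consequence of \Cref{lem:unique}, and your argument is exactly the unpacking one would expect---instantiate the closed-form scheme from \Cref{lem:unique} for the forward direction, read $\cyclic(T)$ off the support for the reverse direction, and use closure under cyclic shifts to pass from a single $S\in\cyclic(T)$ to the whole orbit.
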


\begin{lemma}\label{lem:queries}
    Given a random hidden set $T \subseteq A$ with $|T| = \left\lceil\frac{4k}{5}\right\rceil$, and  oracle access to function $v_T$, finding any $S \in \cyclic(T)$ requires at least $\left(\frac{5}{4}\right)^k$ value or demand queries to $v_T$.
\end{lemma}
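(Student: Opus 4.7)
The plan is to exhibit an oblivious adversary that answers every one of the first $q < (5/4)^k$ queries consistently with a large family of candidate values of $T$, and then at the very end picks a still-valid $T$ for which the algorithm's output lies outside $\cyclic(T)$. Throughout, the adversary maintains a collection $\mathcal{U}$ of subsets of $[k]$ that it has committed are \emph{not} in $\cyclic(T)$; initially $\mathcal{U}=\emptyset$, and each query appends at most one element to $\mathcal{U}$.

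\textbf{Structural observation.} The first step is to isolate where $v_T$ actually depends on $T$. By direct inspection of \eqref{eq:xos}, the dependence is confined to the ``boundary'' regime $|S\cap [k]|=\lceil 4k/5\rceil$: letting $v_0$ denote the $T$-independent function obtained by substituting the ``bonus $2$'' branch wherever the definition would otherwise involve $\cyclic(T)$, one has $v_T\le v_0$ pointwise, with equality exactly at sets $S$ satisfying $S\cap[k]\notin\cyclic(T)$, and $v_T(S)=v_0(S)-\tfrac{1}{80k}$ when $|S\cap[k]|=\lceil 4k/5\rceil$ and $S\cap[k]\in\cyclic(T)$.

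\textbf{The adversary.} On a value query $v_T(S)$ the adversary answers $v_0(S)$ and, if $|S\cap[k]|=\lceil 4k/5\rceil$, appends $S\cap[k]$ to $\mathcal{U}$. On a demand query with price vector $q$ the adversary returns $S^\star\in\arg\max_{S\subseteq A}\bigl(v_0(S)-q(S)\bigr)$ under canonical tie-breaking, and, if $|S^\star\cap[k]|=\lceil 4k/5\rceil$, appends $S^\star\cap[k]$ to $\mathcal{U}$. For every $T$ with $\cyclic(T)\cap\mathcal{U}=\emptyset$ these answers are truthful: value queries are correct because $v_T=v_0$ at the probed set, and for the demand response we have
\[
v_T(S^\star)-q(S^\star)\;=\;v_0(S^\star)-q(S^\star)\;\ge\;v_0(S)-q(S)\;\ge\;v_T(S)-q(S)\qquad \forall S\subseteq A,
\]
so $S^\star\in\arg\max_S\bigl(v_T(S)-q(S)\bigr)$, i.e.\ a legitimate demand-oracle output.

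\textbf{Counting argument and main obstacle.} Because $k$ is prime and $0<|T|<k$, every $U\in\mathcal{U}$ admits exactly $k$ distinct cyclic shifts, so after $q$ queries at most $qk$ values of $T$ have been ruled inconsistent. When the algorithm outputs its alleged witness $S^\dagger$, avoiding $\cyclic(S^\dagger\cap[k])$ rules out at most $k$ further values of $T$, so a valid adversarial $T$ still exists whenever $(q+1)k<\binom{k}{\lceil 4k/5\rceil}$; combined with the crude estimate $\binom{k}{\lceil 4k/5\rceil}\ge 5^{\lfloor k/5\rfloor}$ this yields $q\ge (5/4)^k$ for all sufficiently large $k$. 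The subtle step in the whole argument is the demand-query case: one must show that a \emph{single} fixed response $S^\star$, computed against the $T$-independent function $v_0$, is simultaneously a valid demand answer for every $T$ consistent with the enlarged $\mathcal{U}$. This is exactly where the pointwise inequality $v_T\le v_0$ combines with the one-bit commitment $S^\star\cap[k]\notin\cyclic(T)$: the commitment forces $v_T(S^\star)=v_0(S^\star)$, which converts optimality of $S^\star$ for $v_0-q$ into optimality for $v_T-q$.
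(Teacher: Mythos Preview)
Your proof is correct and takes a genuinely different route from the paper's, particularly in the treatment of demand queries. The paper first argues that any demand query can be simulated by a single value query: given prices $q$, it sorts the $k$ bad actions by price and observes that the only way the $v_T$-demand can differ from a $T$-independent default is if the cheapest size-$\lceil 4k/5\rceil$ set happens to lie in $\cyclic(T)$; so it asks a value query on that set and either learns the demand or has already located $\cyclic(T)$. After this reduction, the paper does a counting argument under random $T$.

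Your approach skips the case analysis entirely. You introduce the dominating $T$-independent function $v_0$ and use the one-line monotonicity observation $v_T\le v_0$ together with the single commitment $S^\star\cap[k]\notin\cyclic(T)$ to certify that the $v_0$-demand is automatically a legitimate $v_T$-demand for every still-consistent $T$. This is cleaner and more robust: it never touches the specific four-level structure of $v_T$ beyond identifying where the $T$-dependence lives, and the same template would work for other hard-instance constructions with the same ``discount on a hidden family'' shape. The price you pay is that your adversary argument yields a worst-case deterministic lower bound (there exists a $T$ on which any $q<(5/4)^k$-query algorithm fails), whereas the lemma is phrased for \emph{random} $T$; but since your adversary's answers are identical across all consistent $T$'s, the same execution trace occurs for at least $\binom{k}{\lceil 4k/5\rceil}-qk$ values of $T$, of which at most $k$ make the output correct, so the conversion to an average-case statement is immediate. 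Both proofs invoke the ``$k$ large enough'' caveat for the final $\binom{k}{\lceil 4k/5\rceil}$ versus $(5/4)^k$ comparison.
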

\begin{proof}
    Let us consider any algorithm that uses value and demand queries to find some set  $S \in \cyclic(T)$. We show that it is possible to modify the algorithm only to use value queries (each demand query can be replaced by a single value query), and still find some set $S \in \cyclic(T)$. Therefore, it is sufficient only to bound the number of value queries that are needed to find some set  $S \in \cyclic(T)$.

    First, since $g,\bot$ have an additive value of $1$, we observe that $g,\bot \in D(v_T,q)$ if and only if $q_g \leq 1$ and $q_\bot\leq 1$ respectively, so we can ignore their prices, and find the demand with respect to demand query $q =(q_1,\ldots,q_k,q_x)$.

    Now, we can observe that since the image (after ignoring actions $\bot,g$) of $v_T$ is $$\left\{0,\frac{1}{40},\frac{1}{40}+\frac{1}{80k},\frac{1}{40}+\frac{1}{40k}\right\},$$ and since a demand query returns the least expensive set among sets with the same value, we can partition the subsets of $[k]\cup\{x\}$ into four parts with the same value of $v_T$, and the demand set must be one of the cheapest sets within each part.

    Let us make the following considerations, given a vector of prices $q=(q_1,\ldots,q_k,q_x)$, and letting $\pi:[k]\rightarrow [k]$ be a permutation that satisfies $q_{\pi(1)} \leq \ldots\leq q_{\pi(k)}$: 
    \begin{itemize}
        \item The cheapest set with a value $v_T$ of $0$ is the empty set $\emptyset$ (this is the only set with this value).
        \item The cheapest set with a value $v_T$ of $\frac{1}{40}$ is $\{x\}$ (this is the only set with this value).
        \item The cheapest set with a value $v_T$ of $\frac{1}{40}+\frac{1}{80k}$ is $\{\pi(1)\}$.
        \item If $\{\pi(1),\ldots,\pi(|T|)\} \notin \cyclic(T)$, then the cheapest set with a value $v_T$ of $\frac{1}{40}+\frac{1}{40k}$ is $\{\pi(1),\ldots,\pi(|T|)\}$.
    \end{itemize}
    From above, we have that, if $\{\pi(1),\ldots,\pi(|T|)\} \notin \cyclic(T) $ then in order to calculate the demand $D(v_T,q)$, one only needs to know $v_T(\pi(1),\ldots,\pi(|T|))$. Thus, we can modify the original algorithm as follows: Every time the original algorithm asks a demand query to compute $D(v_T,q)$, the modified algorithm asks the value query for the set $\{\pi(1),\ldots,\pi(|T|)\}$. If $v_T(\{\pi(1),\ldots,\pi(|T|)\}) = \frac{1}{40}+\frac{1}{40k}$, it means that $\{\pi(1),\ldots,\pi(|T|)\} \notin \cyclic(T)$, and so the modified algorithm (as well as the original one) keep on interrogating the value (resp. demand) oracle. Otherwise, we have that $\{\pi(1),\ldots,\pi(|T|)\} \in\cyclic(T)$, and the modified algorithm can stop and return this set.
    
    We next lower bound the number of value queries needed for the modified algorithm to find some set $S \in \cyclic(T)$. Note that, all $k$ sets belonging to $\cyclic(T)$ have size $\left\lceil\frac{4k}{5}\right\rceil$ and the number of subsets of $[k]$ of size $\left\lceil\frac{4k}{5}\right\rceil$ is $k \choose \left\lceil\nicefrac{4k}{5}\right\rceil$. Therefore, each value query on sets $S \notin \cyclic(T)$ only reveals the information that sets $S^\prime \in \cyclic(S)$ do not belong to $\cyclic(T)$. In other words, each value query can exclude at most $k$ sets from the search of $\cyclic(T)$.  The probability of the $t$-th try to find a set in $\cyclic(T)$ is at most $\frac{k}{{k \choose \left\lceil\nicefrac{4k}{5}\right\rceil} - (t-1)\cdot k}$ where the numerator $k$ is the number of sets in $\cyclic(T)$ and the denominator is the remaining candidates after $t-1$ tries. Then, via union bound, this implies that the expected number of value queries needed to find \emph{any} set belonging to $\cyclic(T)$ is
    \begin{align*}
        \sum_{t \in \left[\frac{1}{k} \cdot {k \choose \left\lceil\nicefrac{4k}{5}\right\rceil}\right]} \frac{k}{{k \choose \left\lceil\nicefrac{4k}{5}\right\rceil} - (t-1)\cdot k} \cdot t &\geq \frac{1}{k} \cdot \left({k \choose \left\lceil\nicefrac{4k}{5}\right\rceil} + k\right) \cdot \left(H\left(\frac{{k \choose \left\lceil\nicefrac{4k}{5}\right\rceil}}{k}\right) - 1\right) \geq \frac{1}{k} \cdot \left(\frac{5}{4}\right)^k,
    \end{align*}
    where $H(\ell) = \sum_{r\in[\ell]} \frac{1}{r}$ is the $\ell$-th Harmonic number, and we have used that ${v \choose b} \geq \left(\frac{v}{b}\right)^b$, as well as the fact that $k$ is large enough. This concludes the proof.
\end{proof}

In summary, Lemma~\ref{lem:unique} shows that for the constructed XOS inspection cost function (which is XOS by Lemma~\ref{lem:xos}), the unique optimal inspection scheme has to inspect according to $\cI_T$. Due to Observation~\ref{obs:hard}, we know that computing the optimal inspection scheme is as hard as finding (through value or demand queries) any set in $\cyclic(T)$. Finally, Lemma~\ref{lem:queries} shows that the number of value or demand queries to $v_T$ that is needed for finding a set in $\cyclic(T)$, is exponential in the number of actions $n$. This concludes the proof of Theorem~\ref{thm:impossibility}. \qed

\section{Inapproximability for Subadditive Inspection Cost} \label{sec:subadditive}

In this section, we present a stronger hardness result for the class of subadditive inspection cost functions. In particular, we show that finding a better than $\nicefrac{22}{21}$-approximation requires an exponential number of queries. 
\begin{theorem}\label{thm:inapproximability}
    Every algorithm that for every instance $(A,f,c,\insp) $ with subadditive inspection cost function $\insp$  returns a $\nicefrac{22}{21}$-approximation to the optimal randomized inspection scheme, must use at least an exponential number of value (or demand) queries. 
\end{theorem}

\subsection{Hard Instance Construction}
Let us consider a family of instances $\inst_{X,Y,Z}=(A, f, c, v_{X,Y,Z})$ over the action set $A = \{\bot, g\} \cup [3n]$, with $f(\bot) = 0, f(g) = 1, f(i) = \frac{3}{25} ~\forall~ i \notin \{\bot, g\}$, and  $c(\bot) = 0, c(g) = \frac{3}{4}, c(i) = \frac{3}{100} ~\forall~ i \notin \{\bot, g\}$, parametrized by a balanced partition $(X,Y,Z)$ of $[3n]$, i.e.,  $|X| = |Y| = |Z| = n$ where $X,Y,Z$ are disjoint. Given parameters $X,Y,Z$, the inspection cost function $v_{X,Y,Z}$ satisfies:
\begin{align}\label{eq:subadditive}
    v_{X,Y,Z}(S) \eqdef  \begin{cases} 
    0 & \text{ if } S= \emptyset \\ 
    \frac{3}{100} & \text{ if } g\notin S \text{ and } S\neq \emptyset   \text{ and } (|S\cap[3n]| \leq \frac{3n}{2} \text{ or } S\cap X =\emptyset \text{ or } S\cap Y =\emptyset \text{ or } S\cap Z =\emptyset) \\
        \frac{3}{50} & \text{ if } g\notin S  \text{ and } (|S\cap[3n]| > \frac{3n}{2} \text{ and } S\cap X \neq \emptyset \text{ and } S\cap Y \neq \emptyset \text{ and } S\cap Z \neq \emptyset) \\            1 & \text{ if } g\in S. 
    \end{cases} \nonumber
\end{align}
Our inspection cost function is built such that inspecting $g$ is always suboptimal as it is better to inspect all other actions, and for the other cases, the cost is $3/100$ if the set is small enough (up to size $3n/2$, or if it contained in $2$ out of the three parts of the partition $(X,Y,Z)$.
We first show that for every partition $(X,Y,Z)$, indeed the inspection cost function is subadditive:

\begin{lemma}\label{lem:subadditive}
    For every balanced partition $(X,Y,Z)$ of $[3n]$, it holds that $v_{X,Y,Z}$ is subadditive.
\end{lemma}
\begin{proof}
    For any two sets $S,S^\prime \neq \emptyset$, consider two cases: 
    If  $g \in S \cup S^\prime$, it must belong to at least $S$ or $S'$. Without loss of generality assume that $g\in S$. Then $v_{X,Y,Z}(S) =1 = v_{X,Y,Z}(S\cup S')$, which satisfies the subadditivity condition.

    Otherwise ($g \notin S \cup S^\prime$), since both $S,S'$ are not empty, it holds that $v_{X,Y,Z}(S),v_{X,Y,Z}(S')\geq \frac{3}{100}$, and since $g\notin S\cup S'$, it holds that $v_{X,Y,Z}(S\cup S') \leq \frac{3}{50}$. Overall, we obtain that $v_{X,Y,Z}(S)+v_{X,Y,Z}(S') \geq v_{X,Y,Z}(S\cup S')$, which concludes the proof of the lemma.
    \end{proof}

\subsection{Proof of Theorem~\ref{thm:inapproximability}}

We prove Theorem~\ref{thm:inapproximability} by showing the following lemmas.
We first show in Lemma~\ref{lem:opt-inspect-subadditive} that for any balanced partition $(X,Y,Z)$, there exists a randomized inspection scheme that gives a utility of $0.22$ for the principal.
Then, we show in Lemma~\ref{lem:subadd-non-opt} that any inspection scheme that does not inspect a ``good'' set (a set of size larger than $3n/2$ that belongs to at most two parts of the partition $(X,Y,Z)$) cannot provide a utility of better than $0.21$ to the principal. Lastly, we show that for an unknown random balance partition $(X,Y,Z)$, finding a ``good'' set requires an exponential number of value queries to $v_{X,Y,Z}$. We remark that similarly to our previous construction in Section~\ref{sec:xos}, demand queries to $v_{X,Y,Z}$ can be simulated by value queries, which concludes the proof of the theorem.

\begin{lemma}\label{lem:opt-inspect-subadditive}
    For every balanced partition $(X,Y,Z)$ of $ [3n]$, thee exists a randomized inspection scheme with principal's utility  of  $0.22$.
\end{lemma}

\begin{proof}
Consider  the  inspection scheme
$\cI_{X,Y,Z}=\left(g, \alpha= \frac{3}{4}, p\right)$, where $p([3n] \setminus X) = p([3n] \setminus Y) = p([3n] \setminus Z) = \frac{1}{3}$ and $p(S)=0$ for all other $S$ (i.e., the principal inspects two random parts of the partition $(X,Y,Z)$).
    Under inspection scheme $\cI_{X,Y,Z}$, we have that $p(\bot)=p(g)=0, p(i) = \frac{2}{3}$ for all $i \notin \{\bot, g\}$. Thus, the agent's best response is to take action $g$ since
    \begin{align*}
        u_\agent\left(g, \frac{c(g)}{f(g)}, p, \bot\right) &= f(\bot) \cdot \frac{c(g)}{f(g)} \cdot (1 - p(\bot) - p(g)) - c(\bot) = 0\\
        u_\agent\left(g, \frac{c(g)}{f(g)}, p, i\right) &= f(i) \cdot \frac{c(g)}{f(g)} \cdot (1 - p(i) - p(g)) - c(i) = \frac{3}{100} - \frac{3}{100} = 0, ~\forall~i \notin \{\bot, g\}\\
        u_\agent\left(g, \frac{c(g)}{f(g)}, p, g\right) &= f(g) \cdot \frac{c(g)}{f(g)} - c(g) = 0.
    \end{align*}
    Thus, the agent does not strictly favor any other action over  $g$. Therefore, the principal's utility is
    \begin{align*}
        u_{\principal}(g, \alpha, p,g) &= \left(1 - \frac{c(g)}{f(g)}\right) \cdot f(g) - \sum_{T = X,Y,Z} p([3n] \setminus T) \cdot v_{X,Y,Z}([3n] \setminus T)\\
        &= \frac{1}{4} - \frac{3}{100} = 0.22,
    \end{align*}
    which concludes the proof.
\end{proof}

We next show that any inspection scheme that does not inspect a set of size larger than $3n/2$ that belongs to at most two parts of the partition $(X,Y,Z)$ extract no more than $0.21 $ utility for the principal.

\begin{lemma} \label{lem:subadd-non-opt}
    Any inspection scheme that does no inspect with a positive probability a set $S$ with size $|S|>3n/2$ that is contained in at most two parts of the partition $(X,Y,Z)$, gives a principal's utility of at most  $0.21$.
\end{lemma}
\begin{proof}
    We first note that in order to guarantee a principal's utility of more than $0.21$, the principal must incentivize action $g$ as all other actions have a social welfare (which bounds the principal's utility) of at most $\frac{9}{100}$.
    Moreover, we know that the principal must pay the agent at least the cost of $g$, so the principal must use a contract $\alpha$ such that $\alpha\in [\frac{3}{4},1]$.
    
    Let $p$ be the distribution over inspected sets.
    We can assume that $p(g)=0$ as it is cheaper to inspect $A\setminus\{g\}$ instead of a set that includes $g$ and it reveals the same information. We can also assume that $p(\bot)=0$ since a contract of $\alpha\geq \frac{3}{4}$ guarantees that the agent prefers action $g$ over $\bot$ (so by the monotonicity of the inspection cost function, this action does not need to be inspected). 
    Let $i^\star=\arg\min_{i\in [3n]} p(i)$.
    We know that $\alpha\cdot f(g)-c(g)\geq \alpha\cdot (1-p(i^\star))\cdot  f(i^\star)-c(i^\star)$ which by rearranging we get that implies that 

        \begin{equation}\label{eq:pi}
        p(i^\star) \geq \frac{6}{\alpha} -\frac{22}{3}.
    \end{equation}

    On the other hand, the expected inspection cost satisfies:
    \begin{eqnarray}
        \sum_{S} p(S)\cdot v_{X,Y,Z}(S) & = & \sum_{S: 0< |S\cap [3n]|  \leq 3n/2} p(S) \cdot v_{X,Y,Z}(S) +\sum_{S: |S\cap [3n]| > 3n/2} p(S) \cdot v_{X,Y,Z}(S) \nonumber \\
        & = & \sum_{S: 0<|S\cap [3n]|  \leq 3n/2} p(S) \cdot \frac{3}{100} +\sum_{S: |S\cap [3n]| > 3n/2} p(S) \cdot \frac{3}{50},  \label{eq:cost}
    \end{eqnarray}
    where the second equality is because of the assumption of the lemma that $p$ does not inspect a set of size greater than $3n/2$ that is contained in at most two parts of the partition $(X,Y,Z)$, thus, all sets in the second sum have costs of $\frac{3}{50}$.

    It also holds that 
    \begin{equation}
        3n\cdot p(i^\star) \leq \sum_{i\in [3n]} p(i) = \sum_{S} p(S) \cdot |S\cap [3n] |  \leq \sum_{S:0< |S\cap [3n]|  \leq 3n/2} p(S) \cdot 3n/2 +\sum_{S: |S\cap [3n]| > 3n/2} p(S) \cdot 3n, \nonumber
    \end{equation}
    which by dividing by $3n$ we obtain that 
    \begin{equation} \label{eq:i}
         p(i^\star) \leq  \sum_{S:0< |S\cap [3n]|  \leq 3n/2} p(S)/2 +\sum_{S: |S\cap [3n]| > 3n/2} p(S).
    \end{equation}

    Combining Equations~\eqref{eq:cost} and \eqref{eq:i} we obtain that
    $$  \sum_{S} p(S)\cdot v_{X,Y,Z}(S) \geq \frac{3}{50} \cdot p(i^\star),
    $$
    therefore, the principal's utility is bounded by $$ (1-\alpha)f(g)- \sum_{S} p(S)\cdot v_{X,Y,Z}(S) \leq 1-\alpha -\frac{3}{50} \cdot p(i^\star) \stackrel{\eqref{eq:pi}}{\leq}  1- \alpha -\frac{3}{50} \cdot \left(\frac{6}{\alpha} - \frac{22}{3}\right), $$
    where the last expression is minimized in the interval $[0.75,1]$ at $0.75$, which gives the principal a utility of $0.21$.
\end{proof}

To conclude the proof of Theorem~\ref{thm:inapproximability}, we show that finding a set $S$ of size larger than $3n/2$ that is contained in at most two parts of the partition $(X,Y,Z)$ requires an exponential number of value queries to $v_{X,Y,Z}$.
\begin{lemma}\label{lem:queries-subadd}
    Given a random unknown balanced partition $(X,Y,Z)$ of $[3n]$, an exponential number of value queries to the function $v_{X,Y,Z}$ is required in order to find a set $S$ with size $|S|>3n/2$ that is fully contained in one of the sets $[3n]\setminus X, [3n]\setminus Y, [3n]\setminus Z$.
\end{lemma}
\begin{proof}
We first observe that using a value query $v_{X,Y,Z}(T)$ gives only information if $g\notin T$, and if $|T\cap[3n]| > 3n/2$, and that the only information it reveals is whether $T\cap [3n]$ is a subset of at most two parts of the balanced partition $(X,Y,Z)$. Therefore, it is more beneficial to ask queries for sets of size exactly $3n/2+1$ (assuming that $n$ is even), as if $T$ is contained in at most two parts of $(X,Y,Z)$, so does every subset of it.
The number of ``good'' sets $T$ of size $3n/2+1$ is bounded by $ 3\cdot {2n \choose 3n/2+1}$ while the number of overall sets of size $3n/2+1$ is $ {3n \choose 3n/2+1}$. Thus, in order to find a ``good'' set one needs to go over exponential number of value queries. 
\end{proof}

\begin{remark}
We note that Theorem~\ref{thm:inapproximability} holds even if the algorithm has access to demand queries to the inspection cost function as similarly to the hardness in Section~\ref{sec:xos}, demand queries can be simulated by value queries for the class of functions we use in our hardness construction.
\end{remark}

\section{Extension to the Partially-Observable Inspection Model}\label{sec:partial-model}

In this section, we show how to extend our results to the partially-observable model, where the principal when inspecting the set $S$ only knows whether or not \textit{some} action $j \in S$ has been taken by the agent (rather than knowing for each inspected action whether it was taken or not). That is, let $\bb(S) = \ind{j \in S}$ be the bit of feedback the principal receives. Moreover, when only a single bit of feedback is given to the principal, it holds that the information extracted from $S$ and $A \setminus S$ is exactly the same, since $\bb(S) = 1 - \bb(A \setminus S)$. Thus, it is without of loss of generality to assume that the inspection cost function $v$ is a symmetric set function, i.e., $v(S) = v(A\setminus S)$ for every set $S\subseteq A$ (as any non-symmetric inspection scheme $v:2^A \rightarrow \R_{\geq 0} $ can be replaced by the symmetric inspection scheme $v':2^A\rightarrow \R_{\geq 0}$, where $v'(S)=\min\{v(S),v(A\setminus S)\}$). 
In the paritally-observable model, it is no longer without loss of generality to assume that the inspection cost function $v$ is monotone. Mathematically, requiring that $v$ is also monotone implies that $v$ would need to be a constant function. Moreover, conceptually, checking whether the agent took an action within a class of actions might be less costly than checking whether the agent took a specific action within the class. Thus, we no longer assume the inspection cost function $v$ is monotone. We also assume that inspecting the empty set (or equivalently inspecting $A$), which is equivalent to not inspecting, costs $0$. In other words, we allow the principal to not inspect at no cost.

We focus on the case where the inspection cost function $v$ is submodular. This class of submodular cost functions is very broad, and in particular captures many interesting scenarios, e.g., the scenario studied by the concurrent and independent paper of \citet{FallahJ23}, and the extension to multiple safety levels discussed in the introduction.
In \citep{FallahJ23}, they have two types of actions, namely, safe and unsafe actions. The principal wants to incentivize the agent to take a safe action, and can only make an inspection that costs $\kappa$ that checks whether the taken action was a safe one or not. This inspection function can be captured by the submodular cost function of a minimal cut in a weighted graph, where there is a node for every action and the sum of the weight of the edges between the safe actions and the unsafe action is equal to the inspection cost $\kappa$, while all other edges are sufficiently large such that no reasonable inspection scheme will inspect something different than whether the taken action was safe or unsafe. In any reasonable contract with an inspection scheme, the principal will either inspect the empty set (or equivalently $A$) which has a $0$ inspection cost, or the set of unsafe (or equivalently safe) actions which has a cost of $\kappa$. For the example of $L$ safety levels, consider actions set $ A = A_1, \cup \ldots \cup A_L$, where for every $\ell\in [L]$,  $A_\ell$ is the set of actions of safety level $\ell$.
Inspecting whether the safety level of the taken action is at least $\ell$ is $v_\ell$, where $0=v_1 \leq \ldots \leq v_L$.
This scenario can be captured by a submodular inspection cost function $v$, as a minimum cut in the full graph on vertices $A$ with weights $w(e) = \frac{v_{\ell+1}}{\mid A_{\ell} \times A_{\ell+1} \mid }$ for $e \in A_{\ell} \times A_{\ell+1}$ for $\ell \in [L-1] $,  $w(e) =0 $ for edges of non-consecutive levels, and $w(e)$ is arbitrarily large for edges between actions of the same level. The last constraint ensures that any inspection must only inspect entire safety level sets, and if $\ell$ is the maximum inspected safety level in the cut (assuming without loss of generality that the safety level $L$ is not in the cut), then inspecting the safety levels $[\ell]$ has at most the same cost and reveals at least the same information. Thus, the principal always inspects a prefix of the safety levels, which encompasses the scenario of multiple safety levels.

We note that by our assumption of symmetry of the inspection cost function, we will assume throughout the section that the principal only inspects sets of actions that do not include the suggested action.

\subsection{Deterministic Inspection Schemes}
Our first result is that it is possible to find the best deterministic inspection scheme for this problem in polynomial time for the case of submodular inspection cost.

\begin{theorem}\label{thm:det-partial-positive}
    In the partially-observable model, there exists an algorithm that, for every instance $(A,c,f,v)$ with a submodular inspection cost function, returns the optimal \textit{deterministic} IC inspection-scheme and runs in time polynomial in $n=|A|$, using at most a polynomial number of value queries to $v$.
\end{theorem}
\begin{proof}
 Our algorithm and analysis follows the same steps as in Section~\ref{sec:deterministic} with the following modification. Since we do not assume monotonicity of the inspected cost function, it might be beneficial to inspect a set that contains $S_i$ or $S_{i,j}$ that does not contain $i$. Thus, for each set $S\neq \{i\}$ that our algorithm from Section~\ref{sec:deterministic} considers, we run a submodular minimization algorithm that finds $S^* \in  \arg\min_{S' \subseteq A \setminus \{i\} \mid S \subseteq S'} v(S') $. The last step is equivalent to unconstrained submodular minimization which can be done in polynomial time with a polynomial number of value queries to $v$ \citep{iwata2001combinatorial}.
 Lastly, instead of considering the set $\{i\}$ we consider the set $A \setminus \{i\}$. The best among these inspection schemes is the best deterministic IC inspection scheme. The proof of optimality is identical to the proof of Theorem~\ref{thm:det}. 
\end{proof}

We remark that it in contrast to the case of monotone inspection cost functions, the result of finding the best deterministic IC inspection scheme cannot be extended to general symmetric inspection cost functions.

\begin{proposition}\label{prop:det-partial-negative}
        In the partially-observable model, no algorithm that uses at most a polynomial number of value queries can return the optimal \textit{deterministic} IC inspection-scheme for every instance $(A,c,f,v)$.
\end{proposition}
\begin{proof}
    Consider the following random instance with $n$ actions denoted by $[n]$ where $f(i) = \frac{i}{n}$, and $c(i)= \frac{i^2}{2n^3}$, where the inspection cost function draws a uniformly random non-empty set $T$ that is a strict subset of  $ A \setminus \{n\}$. Then 
    \begin{align*}
    v(S) = \begin{cases}
       0 &\text{ if } S\in\{\emptyset,T,A\setminus T,A\}\\
       1 &\text{ else } 
    \end{cases}.
\end{align*}
Let $j = \max \{ j' \in A \mid  j'\neq n \wedge  j' \notin T\}$ (such $j$ is well defined since $T$ is a strict subset of  $A\setminus \{n\}$). 
The optimal deterministic inspection scheme is then to suggest action $n$, inspect $T$, and offer the contract of $\frac{n+j}{2n^2}$.  To see that this inspection scheme indeed incentivizes action $n$, observe that for every action $j'<n$, either $j'$ is inspected or $j'\leq j$. Thus, the incentive constraint is that 
$$\alpha=\frac{n+j}{2n^2} \geq \frac{n+j'}{2n^2}=  \frac{c(n)-c(j')}{f(n)-f(j')}$$
is satisfied for every $j'$ that is not inspected, thus, $$\alpha f(n) -c(n) = \alpha \left(f(j') + f(n)-f(j')\right) -c(n) \geq \left(\alpha f(j') + c(n)-c(j')\right)-c(n) =   \alpha f(j')-c(j') .$$

To see that this inspection scheme is optimal, we first observe that it is never beneficial to deterministically inspect a set that has non-zero cost as it causes the principal to have no negative utility. It is better to inspect $T$ over the empty set as it reveals more information at the same cost.
The principal's utility under this inspection scheme is $f(n)(1-\frac{n+j}{2n^2}) \geq 1-\frac{1}{n}$ which is higher than the potential utility from any other action (as all other actions hves a value of $f$ of at most $1-\frac{1}{n}$ and require a non-zero payment in order to be incentivized). Thus the optimal contract must incentivize action $n$. Lastly, the principal cannot offer a contract of less than $\frac{n+j}{2n^2}$ as otherwise the agent will prefer to take action $j$.

Lastly, to show that the optimal contract cannot be found using a polynomial number of value queries, observe that in order to decide whether the payment should be $\frac{2n-1}{2n^2}$ the principal needs to go over all sets containing $n-1$ which takes an exponential number queries.
\end{proof}

\subsection{Randomized Inspection Schemes}

Next, we turn to the case of randomized inspection schemes. We show the following theorem:
\begin{theorem}\label{thm:insp-partial}
    In the partially-observable model, there exists an algorithm that, for every instance $(A,c,f,v)$ with a submodular inspection cost function, returns an additive $ \varepsilon$-approximation to the optimal \textit{randomized} IC inspection-scheme and runs in time polynomial in $n=|A|$ and $\frac{1}{\varepsilon}$, using at most a polynomial number of value queries to $v$.
\end{theorem}

\begin{proof}
    
Our proof utilizes the structure derived in Section~\ref{sec:submod} with the following differences. We now can assume that $p(i)=0$, which removes one variable from our optimization. On the other hand, since we no longer have monotonicity, we cannot partition the values of $\alpha$ to a polynomial number of intervals for which we can write a simple optimization problem with a constant number of parameters and optimize it using a closed formula.
However, after fixing the suggested action $i$ we can still write constraints on $\alpha$ given the marginals of $p \in [0,1]^{A\setminus\{i\}}$ as follows:
\begin{align}
    \min_{\alpha, p} \ &\alpha f(i) +  \insp^L(p) \nonumber\\
    \text{s.t. } &\alpha f(i) - c(i) \geq \alpha f(j) \cdot \left(1 - p(j)\right) - c(j), \forall j \neq i \label{eq:insp-partial} \\
    \qquad &\alpha \in \left[\frac{c(i)}{f(i)},1\right], p(j)\in[0,1] \quad \forall j\neq i  \nonumber,
\end{align}
where $v^L $ is the Lov\'asz extension of $v$.

We can assume that $c(i)\neq 0 $ (as for these actions, the optimal way to incentivize them is by offering a contract of $0$ and doing no inspection).

Thus, $\alpha \geq \frac{c(i)}{f(i)}$ is strictly positive, and by dividing the constraints by $\alpha $, rearranging and renaming $\beta=\frac{1}{\alpha}$, we get the equivalent Program:
\begin{align}
    \min_{\beta, p} \ & \frac{f(i)}{\beta} +  \insp^L(p) \nonumber\\
    \text{s.t. } & f(j)-f(i)-f(j) \cdot p(j) + \beta \cdot (c(i)- c(j)) \leq   0 , \quad \forall j \neq i \label{eq:insp-partial2}\\
    \qquad &\beta \in \left[1, \frac{f(i)}{c(i)}\right], \quad  p(j)\in[0,1] \quad \forall j\neq i  \nonumber. 
\end{align}
This is a convex optimization since the  Lov\'asz extension is convex, and since $\frac{f(i)}{\beta}$ is a convex function, and convex functions are closed under summation, and all constraints are linear.
Moreover, if we denote by  $V=\max_{S} v(S)$, then the objective function is $(V+1)$-Lipschitz since $v^L(p\pm x\cdot e_i) \leq v_L(p) + x\cdot V$ while $\frac{f(i)}{\beta\pm x} \leq x$.

Thus, we can find an additive error of $\varepsilon$ in time polynomial in $n,\frac{1}{\varepsilon},\frac{1}{c(i)},V$. We note that the dependency on $\frac{1}{c(i)}$ can be replaced by $\log(\frac{1}{c(i)})$ since as $\beta$ is further than $1$, its impact is reduced.

Solving Program~\eqref{eq:insp-partial2} up to an additive error of $\varepsilon$ for every action $i$ gives the optimal inspection scheme up to an additive loss in the utility of $\varepsilon$.
\end{proof}

We note that in contrast to our result in Section~\ref{sec:submod}, we do not have a closed formula for the optimal contract. Thus, this only shows that our algorithm only converges polynomially to the optimal contract.

\section{Discussion}

We introduce a principal-agent model that relaxes the hidden-action assumption and that allows the principal to incentivize the agent through both positive incentives (payments) and negative incentives (inspections). 
To begin our analysis, we focus on deterministic IC inspection schemes where the principal inspects a subset of actions with certainty. We demonstrate that finding the best deterministic IC inspection scheme can be done efficiently for all (monotone) cost functions.
However, committing to deterministic inspection schemes can lead to significantly reduced utilities. As a result, we explore the use of randomized inspection schemes, which are commonly employed in practice. We then provide an efficient method for determining the optimal randomized IC inspection scheme when the inspection cost function is submodular.  
We complement our positive result by showing that the problem becomes intractable when dealing with XOS inspection cost functions. This highlights the class of submodular functions as the boundary of tractability for this problem.

Although our negative result for XOS functions rules out exact optimization, it still leaves open the possibility of efficiently approximating the optimum—for example, through a PTAS.
We ruled out the existence of a PTAS for the case of subadditive inspection cost functions. However, the question of whether a constant-factor approximation is achievable remains open.

While our work primarily focuses on the principal suggesting a single action, an intriguing direction for further exploration would involve considering scenarios where the principal can suggest a subset of actions. Furthermore, investigating the contract with inspections model in the presence of non-binary outcomes, an agent that can choose subsets of actions, multiple agents working on the same project, or hidden types would provide interesting avenues for future study.

\section*{Acknowledgements}
Tomer Ezra is supported by the Harvard University Center of Mathematical Sciences and Applications.
Stefano Leonardi and Matteo Russo are partially supported by the FAIR (Future Artificial Intelligence Research) project PE0000013, funded by the NextGenerationEU program within the PNRR-PE-AI scheme (M4C2, investment 1.3, line on Artificial Intelligence), and by the MUR PRIN grant 2022EKNE5K (Learning in Markets and Society).

\bibliography{references}
\appendix
\section{Omitted Content}\label{app:omitted}

\subsection{Proof of Theorem~\ref{thm:det}}

\thmdet*

\begin{proof}
The proof proceeds as follows.

\paragraph{Incentive compatibility.} We first show that every inspection scheme considered by our algorithm is indeed IC.
For the inspection scheme defined by $(i^*,0,\emptyset)$, this is trivially true, since the agent's utility for action $i^*$ is $0$, while for every other action cannot be more than $0$.

Now, consider inspection scheme $(i,\frac{c(i)-c(j)}{f(i)-f(j)},S_{i,j})$ where $j\in A_i$. Since $j\in A_i$, then the utility of the agent from selecting  action $i$ is strictly positive, as $\frac{c(i)-c(j)}{f(i)-f(j)} \cdot f(i) -c(i) > \frac{c(i)}{f(i)} \cdot f(i) -c(i) =0$.
On the other hand, for $j^\prime\neq i$, if $j^\prime\in S_{i,j}$ then the agent's utility from selecting $j^\prime$ is at most $0$, since the principal inspects action $j^\prime$.
Else ($j^\prime\notin S_{i,j}$), if $j^\prime\in A_i$, then we get that $j^\prime$ must satisfy that $\frac{c(i)-c(j^\prime)}{f(i)-f(j^\prime)  } \leq  \frac{c(i)-c(j)}{f(i)-f(j)  } $, which implies that, if we denote by $\delta_{i,j,j^\prime} = \frac{c(i)-c(j)}{f(i)-f(j) }-\frac{c(i)-c(j^\prime)}{f(i)-f(j^\prime) } \geq 0 $, then \begin{eqnarray}
	f(i) \cdot \frac{c(i)-c(j)}{f(i)-f(j) } &-& c(i)  =  f(i) \cdot \frac{c(i)-c(j^\prime)}{f(i)-f(j^\prime)} -c(i)   + f(i)\cdot\delta_{i,j,j^\prime} \nonumber\\ & \geq & (f(i)-f(j^\prime)+f(j^\prime)) \cdot \frac{c(i)-c(j^\prime)}{f(i)-f(j^\prime)} -(c(i)-c(j^\prime)+c(j^\prime)) + f(j^\prime)\cdot\delta_{i,j,j^\prime} 
	\nonumber\\ & = &  f(j^\prime)\cdot  \frac{c(i)-c(j^\prime)}{f(i)-f(j^\prime)} -c(j^\prime)+ f(j^\prime)\cdot\delta_{i,j,j^\prime}
	\nonumber\\ & = & f(j^\prime) \cdot \frac{c(i)-c(j)}{f(i)-f(j) }-c(j^\prime) , \label{eq:j}
\end{eqnarray}
where the first and last equalities are by the definition of $\delta_{i,j,j^\prime}$, and  the inequality is since $f(i) >f(j^\prime)$. This implies that the agent does not want to deviate to $j^\prime$.
For $j^\prime \notin A_i$, then if $f(j^\prime)< f(i)$ then the same inequalities as \eqref{eq:j} work since $ \delta_{i,j,j^\prime} =   \frac{c(i)-c(j)}{f(i)-f(j)} - \frac{c(i)}{f(i)} + \frac{c(i)}{f(i)}- \frac{c(i)-c(j^\prime)}{f(i)-f(j^\prime)} \geq  0$.

We are left to consider the case that $f(j^\prime) \geq f(i)$. If $f(j^\prime) = f(i)$, then the condition $j^\prime\notin S_{i,j}$ implies that $c(j^\prime) \geq  c(i)$,  thus the agent cannot strictly favor action $j^\prime$ over $i$. If $f(j^\prime) > f(i)$ then the condition $j^\prime\notin S_{i,j}$ implies that $ \frac{c(i)-c(j)}{f(i)-f(j)} \leq \frac{c(i)-c(j^\prime)}{f(i)-f(j^\prime)}$, which allows us to use Inequality~\eqref{eq:j} by observing that $\delta_{i,j,j^\prime} \leq 0$ and $f(i) \leq f(j^\prime)$.

An equivalent analysis (as for the former case) shows that  the inspection scheme $(i,\frac{c(i)}{f(i)},S_i)$ is IC.

For inspection scheme $\left(i,\frac{c(i)}{f(i)},\{i\}\right)$, the utility from action $i$ is $0$, and from any other action is non-positive since action $i$ is inspected.

\paragraph{Optimality.}
We next show that for every IC inspection scheme $(i,\alpha,S)$, the algorithm considers an inspection scheme with at least as much expected utility for the principal.
The case of $c(i)=0$ is solved trivially by the first inspection scheme that the algorithm considers (inspection scheme $(i^*,0,\emptyset)$).

Else ($c(i) >0$), then it must be that $\alpha f(i) \geq c(i)$ (as otherwise it contradicts the individual rationality of the agent), thus $f(i)>0$, and $\alpha \geq \frac{c(i)}{f(i)}$.
We next consider three cases:
If $i\in S$, then the inspection scheme  $\left(i,\frac{c(i)}{f(i)},\{i\}\right)$ achieves at least the same utility for the principal.
Otherwise 
 ($i\notin S$), if $A_i \subseteq S$, then it must be that $S_i \subseteq S$. Assume towards contradiction that there exists $j^\prime\in S_i\setminus S$, then it must belong to $S_i\setminus A_i$ (since $S$ contains $A_i$). If $f(j^\prime)=f(i)$ then the condition is equivalent to $c(j^\prime)<c(i)$ which means that  the agent can benefit from switching to $j^\prime$ as it has the same expected payment from the principal, but with a lower cost (which is a contradiction to IC). If $f(j^\prime)> f(i)$ then by definition of  $S_i\setminus A_i$, it holds that $\frac{c(i)}{f(i)} > \frac{c(j^\prime)}{f(j^\prime)}$ thus 
 \begin{align*}
     \alpha f(j^\prime) -c(j^\prime) &= \left(\alpha-\frac{c(i)}{f(i)} + \frac{c(i)}{f(i)}\right) \cdot f(j^\prime) -c(j^\prime)  \\
     &> \left(\alpha-\frac{c(i)}{f(i)}\right) \cdot f(j^\prime) \geq \left(\alpha-\frac{c(i)}{f(i)}\right) \cdot f(i) = \alpha f(i) -c(i),
 \end{align*}
 which is a contradiction to IC, thus, inspection scheme $(i,\frac{c(i)}{f(i)},S_{i})$ gives at least the same utility to the principal.

 Else ($\exists j \in  A_i \setminus S$), then consider  $j^*= \arg\max_{j\in  A_i \setminus S} \frac{c(i)-c(j)}{f(i)-f(j)}$.
 It must be that $\alpha \geq \frac{c(i)-c(j^*)}{f(i)-f(j^*)}$, as otherwise
 \begin{eqnarray*}
 	 \alpha f(j^*) -c(j^*)  & = & \left(\alpha -\frac{c(i)-c(j^*)}{f(i)-f(j^*)}+\frac{c(i)-c(j^*)}{f(i)-f(j^*)}\right) \cdot (f(j^*)-f(i)+f(i)) -c(j^*) \\
 	 & = & \left(\alpha -\frac{c(i)-c(j^*)}{f(i)-f(j^*)}\right) \cdot f(j^*)+\frac{c(i)-c(j^*)}{f(i)-f(j^*)} \cdot f(i) -c(i) \\ & > & \left(\alpha -\frac{c(i)-c(j^*)}{f(i)-f(j^*)}\right) \cdot f(i)+\frac{c(i)-c(j^*)}{f(i)-f(j^*)} \cdot f(i) -c(i)
 	 \\ &= & \alpha f(i)-c(i),
 \end{eqnarray*}   
where the inequality is since $\alpha <\frac{c(i)-c(j^*)}{f(i)-f(j^*)}$, and since $f(j^*)<f(i)$.

By definitions of $j^*,S_{i,j^*}$, it holds that $A_i \cap S_{i,j^*} \subseteq S $.
It also holds that since $\alpha \geq \frac{c(i)-c(j^*)}{f(i)-f(j^*)}$, every action $j'\in S_{i,j^*} \setminus A_i $ must be in $S$. Assume towards contradiction that there exists $j' \in  S_{i,j^*} \setminus A_i \setminus S $. Then, $f(j') \geq f(i)$, therefore, it holds that 
\begin{eqnarray*}
    \alpha f(j') -c(j') & \geq  & \left(\alpha-\frac{c(i)-c(j^*)}{f(i)-f(j^*)}\right)   f(i)  + \frac{c(i)-c(j^*)}{f(i)-f(j^*)} f(j') -c(j')
    \\ & > & \left(\alpha-\frac{c(i)-c(j^*)}{f(i)-f(j^*)}\right)   f(i)  + \frac{c(i)-c(j^*)}{f(i)-f(j^*)} f(i) -c(i)
    \\ & = &  \alpha f(i) -c(i),
\end{eqnarray*}
where the first inequality is since $\alpha \geq \frac{c(i)-c(j^*)}{f(i)-f(j^*)}$, and since $f(j') \geq f(i)$, and the second inequality is since  $j' \in S_{i,j^*} \setminus A_i$.
Therefore, $S_{i,j^*} \subseteq S $, thus inspection scheme $(i,\frac{c(i)-c(j^*)}{f(i)-f(j^*)},S_{i,j^*})$ gives at least the same utility to the principal,
which concludes the proof.

\paragraph{Complexity.} The algorithm uses at most $O(n^2)$ value queries since it asks for each $i$ at most $n+1$ queries.
The algorithm runs in polynomial time by design.
\end{proof}

\subsection{Proof of Lemma~\ref{lem:submod}}

\lemsubmod*

\begin{proof}
The first claim holds since $$\sum_{S \subseteq A}  p^\prime(S) = p^\prime(\emptyset) + \sum_{t\in [n]} p^\prime(\{\pi(t),...,\pi(n)\}) =p^\prime(\emptyset) + \sum_{t\in [n]} p(\pi(t)) -p(\pi(t-1))  =   \sum_{S \subseteq A} p(S),$$
where the last equality holds by telescoping and by definition of $p^\prime(\emptyset)$.

    To show the second claim, let $\ell$ be the preimage of $j$ according to bijection $\pi$ , i.e., $j = \pi(\ell)$. We get
    \begin{align*}
        p^\prime(j) &= \sum_{S \subseteq A: j \in S} p^\prime(S) = \sum_{t \leq \ell} p^\prime(\{\pi(t),...,\pi(n)\}) \\
        &= \sum_{t \leq \ell} \left(p(\pi(t)) - p(\pi(t-1)) \right)= p(\pi(\ell)) - p(\pi(0)) = p(\pi(\ell)) \\
        &= p(j).
    \end{align*}
    For the third claim, we have
    \begin{align*}
        \sum_{S \subseteq A} p(S) \cdot v(S) &= \sum_{S \subseteq A} p(S) \cdot \left(\sum_{\ell=1}^{n} \ind{\pi(\ell) \in S} \cdot v(\pi(\ell) \mid S \setminus \{\pi(1),\ldots,\pi(\ell)\})\right) \\
        &\geq \sum_{S \subseteq A} p(S) \cdot \left(\sum_{\ell=1}^{n} \ind{\pi(\ell) \in S} \cdot v(\pi(\ell) \mid A \setminus \{\pi(1),\ldots,\pi(\ell)\})\right)\\
        &= \sum_{\ell=1}^{n} v(\pi(\ell) \mid A \setminus \{\pi(1),\ldots,\pi(\ell)\}) \cdot \left(\sum_{S \subseteq A: \pi(\ell) \in S} p(S)\right)\\
        &= \sum_{\ell=1}^{n} p(\pi(\ell)) \cdot v(\pi(\ell) \mid A \setminus \{\pi(1),\ldots,\pi(\ell)\})\\
        &= \sum_{\ell=1}^{n} (p(\pi(\ell)) - p(\pi(\ell - 1)) \cdot v(A \setminus \{\pi(1),\ldots,\pi(\ell-1)\})\\
        &= \sum_{\ell=1}^{n} p^\prime(A \setminus \{\pi(1),\ldots,\pi(\ell-1)\}) \cdot v(A \setminus \{\pi(1),\ldots,\pi(\ell-1)\})\\
        &= \sum_{S \subseteq A} p^\prime(S) \cdot v(S),
    \end{align*}
    where the first equality is by definition of marginals, the inequality is by submodularity of $v$, the second equality is by switching the order of summations, the third equality is by definition of $p(\pi(\ell))$, the fourth equality follows from telescoping, the fifth equality holds by definition of $p^\prime$, and the last equality follows since all added sets to the summation have either $p^\prime(S)=0$ or $S= \emptyset$ for which $v(S)=0$.
\end{proof}

\section{Non Incentive-Compatible Inspection Schemes}
\label{app:non-ic}

In the main body of the paper, we focused on incentive-compatible inspection schemes. Indeed, we observe that the principal may propose an inspection scheme $(j, \alpha, p)$  so as to cause the agent to best respond by some other action $i \neq j$: We call these types of inspection schemes \emph{non-incentive-compatible} (non-IC). 

We show that, for deterministic inspection, IC inspection schemes are without loss of generality in that they can extract at least as much utility as non-IC inspection schemes.

\begin{theorem}
    For every instance $(A,c,f,v)$, and every deterministic non-incentive-compatible inspection scheme $(j, \alpha, S)$ with agent's best response $i$, there exists a deterministic incentive-compatible inspection scheme $(i', \alpha', S')$ with agent's best response $i'$ yielding at least as much utility for the principal.
\end{theorem}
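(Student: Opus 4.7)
The plan is to fix a deterministic non-IC inspection scheme $(j,\alpha,S)$ whose agent best response is some $i\neq j$, and build a dominating IC scheme by a case analysis on whether $\{i,j\}\cap S$ is empty.

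First, suppose $\{i,j\}\cap S=\emptyset$. My candidate is the scheme $(i,\alpha,S)$, which only relabels the suggested action. I would carry out a short action-by-action comparison of agent utilities: for $\ell=i$, both schemes give $\alpha f(i)-c(i)$ (old: because neither $i$ nor $j$ is in $S$, so $i$ is not caught; new: because $i$ is suggested); for $\ell=j$, both give $\alpha f(j)-c(j)$ (old: because $j$ is suggested; new: because $\{i,j\}\cap S=\emptyset$, so $j$ is not caught); and for any other $\ell$, the catching condition $\{\ell,\text{suggested}\}\cap S\neq\emptyset$ collapses in both schemes to $\ell\in S$, since neither of the two possible suggested actions lies in $S$. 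Hence $i$ remains a best response to $(i,\alpha,S)$, and the principal's utility $(1-\alpha)f(i)-v(S)$ is identical in both schemes.

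Second, suppose $\{i,j\}\cap S\neq\emptyset$. Under the original scheme, playing $i\neq j$ gets the agent caught regardless (because either $i\in S$ or $j\in S$ triggers the no-payment rule for the deviation), so the agent's utility from $i$ is $-c(i)$. Since $i$ is a best response and $\bot$ always delivers utility at least $0$, individual rationality forces $c(i)=0$. The principal's utility then reduces to $f(i)-v(S)\leq f(i^*)$, where $i^*\in\arg\max\{f(a):c(a)=0\}$. As already observed in the proof of \Cref{thm:det}, the scheme $(i^*,0,\emptyset)$ is IC and delivers exactly $f(i^*)$, hence dominates.

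The main subtlety is the utility bookkeeping in Case~1: one must verify that relabeling the suggested action from $j$ to $i$ does not alter the agent's utility from any action. This reduces to the symmetric observation that if neither $i$ nor $j$ lies in $S$, then $\{\ell,j\}\cap S\neq\emptyset \iff \ell\in S \iff \{\ell,i\}\cap S\neq\emptyset$ for every $\ell\neq i,j$, together with the direct checks for $\ell\in\{i,j\}$ above. Tie-breaking in favor of the suggested action, implicit in the paper's IC definition, then ensures $i$ is actually played in $(i,\alpha,S)$, so the scheme is IC in the strict sense. No difficulty arises from the combinatorial structure of $v$, since the inspected set $S$ is reused verbatim.
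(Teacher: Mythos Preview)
Your proof is correct and follows essentially the same case split as the paper: when $\{i,j\}\cap S=\emptyset$ both you and the paper relabel to $(i,\alpha,S)$ and verify all agent utilities are preserved; when $\{i,j\}\cap S\neq\emptyset$ both deduce $c(i)=0$ from individual rationality. The only cosmetic difference is that in the latter case the paper reuses the same inspected set via $(i,0,S)$, matching the original utility $f(i)-v(S)$ exactly, whereas you switch to $(i^*,0,\emptyset)$ and dominate by $f(i^*)\geq f(i)-v(S)$; either choice works.
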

\begin{proof}
    Let us consider a non-incentive-compatible inspection scheme $(j, \alpha, S)$ inducing the agent to take action $i \neq j$ as the best response.
    
    If $\{i,j\} \cap S \neq \emptyset$ (i.e., at least one of $i,j$ is inspected), then we know that the principal pays the agent $0$, which means that her incurred cost $c(i)$ is $0$. Therefore, the principal could simply suggest action $i$, offer $\alpha=0$, and inspect set $S$, and the agent's best response would still be $i$. This means that the deterministic incentive-compatible inspection scheme $(i, 0, S)$ still induces the agent's best response to be $i$, and thus achieves the same utility as the non-incentive-compatible one.
    
    Otherwise, $\{i,j\} \cap S = \emptyset$, and we have that $u_\agent(j,\alpha ,S,\ell) = u_\agent(i,\alpha ,S,\ell)$ for every $\ell \in A$. Thus $i$ is also the best response to  $(i,\alpha,S)$.
\end{proof}
 
For randomized inspection schemes, the same does not hold, as the next example demonstrates.

\begin{example}\label{ex:gap-ic}
    Let us consider the following instance with three actions $A = \{\bot, 1, 2\}$: $c(\bot)=0,~f(\bot)=0,~c(1)=\frac{1}{10},~f(1)=\frac{2}{5},~c(2)=\frac{1}{2},~f(2)=1$, and the additive inspection cost function  $v(S) \eqdef \frac{3}{10} \cdot \ind{1\in S} + 2 \cdot \ind{2\in S}  $.
    \begin{itemize}
        \item Under no inspection, the best the principal can do is incentivizing action $2$ by offering $\alpha=\frac{2}{3}$, which yields a utility of $\frac{1}{3}$. Note that the utility from incentivizing action $1$ is $\frac{3}{10} < \frac{1}{3}$.
        \item The best randomized incentive-compatible inspection scheme for the principal is to suggest action $2$ and solve the following program:
        \begin{align*}
            \min_{\alpha, p(1), p(2)} &\alpha + p(1) \cdot \frac{3}{10} + p(2) \cdot 2\\
            \emph{s.t. }  &\alpha - \frac{1}{2} \geq \alpha \cdot \frac{2}{5} \cdot (1-p(1)-p(2)) - \frac{1}{10}\\
            &\alpha \geq \frac{1}{2}\\
            &0\leq p(1), p(2) \leq 1.
        \end{align*}
        Hence, $\alpha \approx 0.548$, $p(1) \approx 0.326$,  and $p(2)=0$, yielding a principal's utility of approximately $0.355$. This is the best IC-inspection scheme since an IC-inspection scheme that incentivizes action $\bot$ (respectively, $1$) cannot give the principal a utility that is better than $f(\bot)-c(\bot)=0$ (respectively, $f(1)-c(1)=\frac{3}{10}$).
        \item Let us consider the randomized non-incentive-compatible inspection scheme $(\bot, 1, p)$, where $p(\{\bot\})=\frac{1}{2}$, $p(\{1\}) =\frac{1}{4}$, and $p(\emptyset)=\frac{1}{4}$. Then, the agent's utility from taking each of the three actions is:
        \begin{align*}
            u_\agent(\bot, 1, p, \bot) &= 0,\\
            u_\agent(\bot, 1, p, 1) &= \frac{2}{5} \cdot \frac{1}{4} - \frac{1}{10} = 0,\\
            u_\agent(\bot, 1, p, 2) &= \frac{1}{2} - \frac{1}{2} = 0.
        \end{align*}
        From above, we see that action $2$ is in the agent's best response set with respect to inspection scheme $(\bot, 1, p)$. The principal pays $1$ to the agent only with probability $\frac{1}{2}$ (the probability the agent is not caught taking an action different from $\bot$), and incurs an inspection cost of $\frac{3}{10}$ (for inspecting action $1$) only with probability $\frac{1}{4}$. All in all, the principal has utility $\frac{1}{2} - \frac{3}{40} = 0.425 > 0.355$.
        If the principal pays $1+\varepsilon$ instead of $1$, then the unique best response for the agent is to take action $2$.
    \end{itemize}
\end{example}

\section{Gaps between Randomized and Deterministic Inspection Schemes}
\label{app:gap}

In spite of their convenient structural properties, such as computational tractability in finding them, deterministic inspection schemes can be much worse than randomized ones. In particular, we extend Example \ref{ex:gap} and show that the principal's utility gap between deterministic and randomized inspection schemes can grow as the instance size grows. Formally, we prove the following result:
\begin{proposition}\label{prop:gap}
    There exists an instance $(A, f, c, v)$, and an IC randomized inspection scheme $(i, \alpha, p)$ such that for every deterministic IC inspection scheme $(i^\prime,\alpha^\prime,S)$, it holds that 
    \begin{align*}
         u_\principal(i^\prime,\alpha^\prime,S,i^\prime) &\leq \frac{4}{n} \cdot u_\principal(i, \alpha, p, i).
    \end{align*}
\end{proposition}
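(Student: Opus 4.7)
The plan is to amplify the gap exhibited in Example~\ref{ex:gap} by constructing an $n$-action instance in which the randomized optimum inspects a single cheap action only with probability $\Theta(1/n)$, while every deterministic scheme is forced to either pay high inspection cost or to settle for an action with $O(1/n)$ social welfare. Concretely, I would take $A = \{\bot, g, b_1, \ldots, b_{n-2}\}$ with $f(g)=1$ and $c(g)$ set so that $\mathrm{SW}(g)=\Theta(1)$ (e.g., $c(g)$ of constant order, as in Example~\ref{ex:gap}). Each bad action $b_i$ would have $f(b_i)$ and $c(b_i)$ tuned so that $b_i$ is (barely) preferred by the agent over $g$ at the IR-binding payment, yet $\mathrm{SW}(b_i)=O(1/n)$; $\bot$ provides the opt-out. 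For the inspection cost function $v$, I would use an additive (hence submodular and monotone) function with $v(\{g\})$ of constant order, so that any deterministic scheme paying a single inspection of $g$ incurs cost $\Theta(1)$, but randomizing that inspection with probability $\Theta(1/n)$ incurs only $\Theta(1/n)$ expected cost.

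The randomized scheme I would exhibit is $(g,\alpha,p)$ with $\alpha$ slightly above $c(g)/f(g)$ and $p(\{g\})=\Theta(1/n)$, $p(\emptyset)=1-p(\{g\})$. The IC checks against each $b_i$ and against $\bot$ amount to verifying the single inequality of the form $\alpha\cdot p(\{g\})\cdot f(b_i)\ \geq\ c(g)-c(b_i)-\alpha(1-f(b_i))$, which by construction holds at $p=\Theta(1/n)$. The resulting principal utility is $(1-\alpha)f(g)-p\cdot v(\{g\})=\Theta(1)-\Theta(1/n)=\Theta(1)$.

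For the deterministic upper bound, I would case-split on the suggested action $i'$ in $(i',\alpha',S)$. If $i'=\bot$ the utility is $0$; if $i'=b_j$ then $u_{\principal}\leq f(b_j)-c(b_j)=O(1/n)$ by the SW-bound; if $i'=g$ and $S=\emptyset$, the IC constraint $\alpha'(1-f(b_j))\geq c(g)-c(b_j)$ at the calibrated parameters forces $\alpha'=1-O(1/n)$, giving utility $O(1/n)$; and if $i'=g$ with $S\neq\emptyset$, then either $g\in S$ and $v(S)\geq v(\{g\})=\Theta(1)$, or $g\notin S$ and to satisfy IC one must inspect every $b_j$ not dominated by the surviving constraints, which by additivity costs $\Omega(1)$. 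In all sub-cases the principal's utility is $O(1/n)$. Combining the two bounds yields $u_{\principal}(i',\alpha',S,i')\leq O(1/n)\leq \frac{4}{n}\cdot u_{\principal}(g,\alpha,p,g)$.

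The main obstacle will be calibrating $f(b_i)$ and $c(b_i)$ simultaneously so that (i) each bad action's SW is truly $O(1/n)$ (otherwise incentivizing some $b_j$ without inspection gives $\Theta(1)$ non-inspection utility and collapses the gap), while (ii) the IC constraint against the $b_i$'s at the IR-binding $\alpha$ for $g$ is tight enough that a single $\Theta(1/n)$ probability of inspecting $\{g\}$ suffices, but not so tight that $\alpha$ is driven to $1$. I expect this to require a delicate trade-off reminiscent of the one in Example~\ref{ex:gap}: roughly, the $b_i$'s should have $f(b_i)$ close to but strictly below $f(g)=1$ while $c(b_i)$ is small enough to make the gap $c(g)-c(b_i)$ of order $1/n$ times $1-f(b_i)$, so that the slope $(c(g)-c(b_i))/(f(g)-f(b_i))$ is close to but strictly less than one. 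Verifying that this calibration achieves the three conditions simultaneously (low bad-SW, tight IC, cheap randomized inspection) is the technical heart of the argument.
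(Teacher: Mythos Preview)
Your plan has a genuine obstruction: the calibration you call for in the last paragraph is not merely delicate, it is impossible under your own constraints. For the no-inspection deterministic scheme suggesting $g$ to yield only $O(1/n)$, some bad action $b_j$ must force $\alpha'\ge\frac{c(g)-c(b_j)}{1-f(b_j)}=1-O(1/n)$. Substituting $c(g)-c(b_j)=(1-f(b_j))(1-O(1/n))$ into the randomized IC inequality $\alpha\bigl(1-f(b_j)(1-p)\bigr)\ge c(g)-c(b_j)$, with $\alpha$ bounded away from $1$ (needed for $\Theta(1)$ randomized utility) and $p=\Theta(1/n)$, gives $(1-f(b_j))\cdot\Theta(1)\le \alpha f(b_j)\,p=O(1/n)$. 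This forces $f(b_j)=1-O(1/n)$ and hence $c(b_j)=c(g)-O(1/n)$, so $\mathrm{SW}(b_j)=f(b_j)-c(b_j)=1-c(g)+O(1/n)=\Theta(1)$, directly contradicting your requirement~(i). Worse, the deterministic scheme $(b_j,\,c(b_j)/f(b_j),\,\emptyset)$ is then IC against $g$ (the constraint reads $\alpha'\le (c(g)-c(b_j))/(1-f(b_j))\approx 1$) and gives the principal $\approx 1-c(g)=\Theta(1)$, so your deterministic upper bound collapses. Note also that the $n-2$ bad actions play no role in your randomized analysis: inspecting $\{g\}$ handles all $b_i$ simultaneously, so the required $p$ is the \emph{maximum} of the individual constraints and $n$ never enters.

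The paper's construction is structurally different. It uses the classical linear-contract lower-bound ladder: actions $1,\dots,n-1$ with $f(i)=2^{i+1}/2^n$ and $c(i)=(2^{i+1}-i-1)/2^n$, so that \emph{every} action has social welfare at most $n/2^n$. The additive inspection cost is $v(S)=|S|\cdot n/2^n$, equal to the maximum welfare, so any nonempty deterministic inspection already wipes out all surplus; without inspection the ladder forces $\alpha\ge 1-2^{-j}$ to incentivize action $j$, capping the deterministic utility at $2/2^n$. The randomized scheme suggests the top action with the IR-binding $\alpha=1-n/2^n$ and inspects it with probability $\tfrac12$ (a constant, not $\Theta(1/n)$), giving utility $n/2^{n+1}$ and hence the ratio $4/n$. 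The $n$-factor arises from the \emph{length of the ladder} (all actions having tiny, comparable welfare), not from a small inspection probability; a single-good-action architecture with $\mathrm{SW}(g)=\Theta(1)$ has no analogue of this mechanism.
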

\begin{proof}
    We present the following instance (Table \ref{tab:gap}) where $v:2^A \rightarrow \R_{\geq 0}$ is $v(S) = |S| \cdot \frac{n}{2^n}$, and $f,c$ are defined in the following table: 
    \begin{table}[H]
        \centering
        \begin{tabular}{|c|c|c|}
         \hline
         $A$ & $f$ & $c$ \\ \hline
         $\bot$ & $0$ & $0$   \\ 
         $1$ & $\frac{4}{2^n}$ & $\frac{2}{2^n}$  \\ 
         $\vdots$ & $\vdots$ & $\vdots$   \\ 
         $i$ & $\frac{2^{i+1}}{2^n}$ & $\frac{2^{i+1}-i-1}{2^n}$   \\ 
         $\vdots$ & $\vdots$ & $\vdots$   \\ 
         $n-1$ & $1$ & $\frac{2^n-n}{2^n}$   \\ \hline
        \end{tabular}
        \caption{\label{tab:gap} Instance exhibiting $\Omega(n)$ gap between deterministic and randomized inspection schemes. 
        Similar instances have been used to study various other types of gaps between optimal contract vs optimal linear contract (in different settings) for example, in \cite{DuttingRT19, GuruganeshSW21}.}
    \end{table}
    We first show that the best deterministic inspection scheme never inspects (i.e., always inspects the empty set). Indeed, any deterministic inspection scheme that inspects a non-empty set must pay at least $\frac{n}{2^n}$, since $v(S) \geq  \frac{n}{2^n}$ for all $S \neq \emptyset$. Since the maximal social welfare arising from the instance (when ignoring incentive constraints) is in correspondence of action $n-1$ and equals $\frac{n}{2^n}$, then the highest expected reward minus payment to an agent a deterministic inspection can extract from the instance is bounded above by the social welfare. This means that the utility by inspecting some non-empty set is at most $0$.

    Hence, since the principal does not inspect, we have, by the incentive-compatibility constraints, that if the principal wants to incentivize the agent to take action $j \neq \bot $, then it must hold that: $$\alpha \cdot \frac{2^{j+1}}{2^n} - \frac{2^{j+1} - j -1}{2^n} \geq \alpha \cdot \frac{2^j}{2^n} - \frac{2^j - j}{2^n},$$ which by rearranging gives us that  $\alpha \geq 1 - \frac{1}{2^j}$. Thus, the utility from incentivizing action $j$ is  at most $(1-\alpha)f(j) = \frac{1}{2^j} \cdot \frac{2^{j+1}}{2^n} = \frac{2}{2^n}$. 
    
    On the other hand, let us consider the randomized IC inspection scheme  with suggested action $n-1$, with contract $\alpha = 1 - \frac{n}{2^n}$, and $p(\emptyset) = \frac{1}{2},~p(\{n-1\}) = \frac{1}{2}$ as our randomized inspection distribution. The utility of the agent for action $n-1$ is $\alpha \cdot f(n-1) -c(n-1) =0$, while for every other action $j\neq \bot,n-1$, we have that
    \begin{align*}
         u_\agent(n-1,\alpha,p,j) = \alpha f(j) \cdot \left(1 - p(n-1) - p(j)\right) - c(j) = \left(1-\frac{n}{2^n}\right) \cdot   \frac{2^{j+1}} {2^n} \cdot \frac{1}{2}  - \frac{2^j-j}{2^n} \leq  0 
    \end{align*}
    Hence, the agent is incentivized to take action $n-1$. Thus, the principal gets
    \[
        u_\principal\left(n-1, 1 - \frac{n}{2^n}, \left(p(\emptyset) = \frac{1}{2},~p(\{n-1\}) = \frac{1}{2}\right), n-1\right) = \frac{n}{2^n} \cdot 1 - \frac{1}{2} \cdot \frac{n}{2^n} = \frac{n}{2^{n+1}},
    \]
    which concludes the proof.
\end{proof}

\begin{remark}
    The instance presented in Table \ref{tab:gap} can be easily modified to have $v(S)=0$ for all $S \subseteq A$, implying that the gap in principal's utility between no inspection and best deterministic inspection can also be at least $\Omega(n)$.  This gap is tight, i.e., it is also at most $O(n)$. From \citet{DuttingRT19}, we know that the best (non-inspecting) contract can guarantee a $\frac{1}{n}$ fraction of the maximum possible welfare of the instance. Since the latter is an upper bound on what the principal can achieve using a randomized inspection scheme, then the best contract that does not use inspection (which is a special case of deterministic inspection schemes) guarantees a $\frac{1}{n}$ fraction of the utility of the best randomized inspection scheme.
\end{remark}

\end{document}